\newcommand{\cert}[1]{\mbox{\sc {certainty}}(#1)}
\newcommand{\eat}[1]{}
\newcommand{\squishlist}{
\begin{list}{$\bullet$}
{ \setlength{\itemsep}{0pt} \setlength{\parsep}{3pt}
\setlength{\topsep}{3pt} \setlength{\partopsep}{0pt}
\setlength{\leftmargin}{1.5em} \setlength{\labelwidth}{1em}
\setlength{\labelsep}{0.5em} } }
\newcommand{\squishend}{
\end{list}  }
\theoremstyle{remark}
\newtheorem{example}{Example}
\theoremstyle{plain}
\newtheorem{theorem}{Theorem}
\newtheorem{proposition}{Proposition}
\newtheorem{lemma}{Lemma}
\theoremstyle{definition}
\newtheorem{definition}{Definition}
\newcommand{\coNP}{\mathrm {coNP}}
\newcommand{\PTIME}{\mathrm {PTIME}}
\newcommand{\FO}{\mathrm {FO}}
\newcommand{\FP}{\mathrm {FP}}
\newcommand{\sP}{\mathrm {\#P}}
\newcommand{\Path}{\mbox {-{\sc Path}}}
\newcommand{\Cycle}{\mbox {-{\sc Cycle}}}
\begin{document}

\begin{frontmatter}

\title{Consistent Answers of Conjunctive Queries on Graphs
 \footnote{This work was supported by the project Handling Uncertainty in Data Intensive Applications, co-financed by the European Union (European Social Fund - ESF) and Greek national funds, through the Operational Program ``Education and Lifelong Learning", under the program THALES. Kolaitis is also partially supported by NSF Grant IIS-1217869.}}

\author[ntua]{Foto N. Afrati}
\ead{afrati@softlab.ntua.gr}
\author[cruz]{Phokion G. Kolaitis\corref{cor2}}
\ead{kolaitis@cs.ucsc.edu}
\author[ntua]{Angelos Vasilakopoulos }
\ead{avasila@central.ntua.gr}



\address[ntua]{National Technical University of Athens}
\address[cruz]{UC Santa Cruz and IBM Research - Almaden}

\cortext[cor2]{Corresponding author}


\begin{abstract}
During the past decade, there has been an extensive investigation of the computational complexity of  the consistent answers of Boolean conjunctive queries under primary key constraints. Much of this investigation has focused on self-join-free Boolean conjunctive queries. In this paper, we study  the consistent answers of Boolean conjunctive queries involving a single binary relation, i.e., we consider arbitrary Boolean conjunctive queries on
 directed graphs. In the presence of a single key constraint,  we show that for each such Boolean conjunctive query, either the problem of computing its consistent answers  is expressible in first-order logic, or it is polynomial-time solvable, but not expressible in  first-order logic.
  \eat{In contrast, in the presence of two key constraints,  either this problem is expressible in first-order logic or it is $\coNP$-complete.}
\end{abstract}

\begin{keyword}
Databases, conjunctive queries, database repairs,  consistent answers, key constraints.
\end{keyword}

\end{frontmatter}

\section{Introduction}
\eat{
 The paper referenced below, is, in full, the:

Jef Wijsen: On the consistent rewriting of conjunctive queries under primary key constraints. Inf. Syst. 34(7): 578-601 (2009)

It is the only paper that Angelos found which treats a little bit self-joins.

Old Note by Angelos: Self-join query $R(\underbar{x},y) R(\underbar{z},y)$ does not satisfy conditions 2 and 4
of $[$Wijsen Inf.Sys.2009 Cor. 3$]$ so it is open if is FO or not. In this work, they define refiable queries.

This is a new note by Foto, I leave it here because I wrote it but I believe we only need to say that we prove the cycles are not FO-rewritable, hence they cannot fall in this case anyway.

Below we show that the queries we treat are not refiable (Definition 1 of citation above).

Moreover in the above citation, key-rooted queries (in other works called ``rooted queries'') are
FO rewrittable. Here our cycle queries are not -- as we show.

\begin{example}

Let the query be a $k$-cycle. Let the database be a $k$-cycle $a_1,..., a_k$ with additional edges
$(a_i, b_i)$ and $(b_i,a_{i+1}$ for each $i$ (where $k+1$ is replaced by  1). This database has the property:
for each constant $c_0$, there is a repair such that $c_0$ can be deleted from the repair and still the query
be true on what it remains. Hence no atom of the $k$-cycle query  is  refiable.

\end{example}
}

Database repairs and consistent query answering, introduced  in \cite{arenas99consistent}, provide a principled approach to the problem of managing inconsistency in databases and, in particular, to the problem of giving meaningful semantics to queries on an inconsistent database. If $\Sigma$ is a set of integrity constraints, then an \emph{inconsistent} database w.r.t.\ $\Sigma$ is a database instance $I$ that does not satisfy every constraint in $\Sigma$. A \emph{repair} of an inconsistent database instance $I$  is a database instance $J$ that satisfies every  constraint in $\Sigma$ and differs from $I$ in a ``minimal way".
  The \emph{consistent answers} of a query $q$ on  $I$  is the intersection $\bigcap \{q(J):
\mbox{$J$ is a repair of $I$}\}$. If $q$ is a Boolean query, then
computing the
consistent answers of $q$ is
the following decision problem, denoted by
  $\cert{q}$: given a database instance $I$, is $q(J)$ true on every repair $J$ of $I$?

There has been an extensive investigation of the algorithmic properties of consistent query answering for different classes of integrity constraints and different types of repairs (see \cite{Bertossi2011} for a survey). Much of the focus of this investigation has been on the consistent answers of conjunctive queries under primary key constraints and subset repairs.  Let $\bf S$ be a relational database schema such that every relation  in $\bf S$ has a single key. A \emph{subset} repair of a database instance $I$ over $\bf S$ is a maximal (under set inclusion) subinstance $J$ of $I$  that satisfies every key constraint of $\bf S$. It is easy to see that, in this scenario, for every Boolean conjunctive query $q$, we have that $\cert{q}$ is in $\coNP$.  It is also known that, depending on the query $q$ and the key constraints at hand, the actual computational complexity of $\cert{q}$ may vary from being $\coNP$-complete to being $\FO$-rewritable, i.e., there is a first-order expressible query $q'$ such that, for every database instance $I$, we have that $q$ is true on every subset repair of $I$ if and only if $q'$ is true on $I$.

The preceding state of affairs gave rise to a research program aiming to classify the computational complexity of $\cert{q}$, where $q$ is a Boolean conjunctive query under primary key constraints and subset repairs. After a sequence of partial results by several different researchers \cite{ChomickiM05,FuxmanM07,KolaitisP12,KoutrisS14,Wijsen09,Wijsen10,Wijsen12,Wijsen13} (see also \cite{Wijsen14} for a survey), a breakthrough  trichotomy result was recently announced by Koutris and Wijsen. Specifically,  in \cite{KoutrisW15},  Koutris and Wijsen showed that
 for every self-join-free Boolean conjunctive query $q$, one of the following three statements holds: (a) $\cert{q}$ is $\coNP$-complete; (b) $\cert{q}$ is in $\PTIME$; (c) $\cert{q}$ is $\FO$-rewritable. Moreover, there is an algorithm that, given such a query $q$, the algorithm determines which of these three statements holds for $\cert{q}$.

The hypothesis that the Boolean conjunctive queries considered have no self-joins plays a crucial role in the proof of the trichotomy theorem in \cite{KoutrisW15}. As a matter of fact, essentially all the earlier work on the classification of $\cert{q}$ is about self-join-free conjunctive queries, since most of the currently available techniques cannot handle the presence of self joins.  Two notable exceptions are $\coNP$-hardness results for specific Boolean conjunctive queries with self-joins in \cite{ChomickiM05} and a broad sufficient
condition
  for $\FO$-rewritability of Boolean conjunctive queries involving a single relation in \cite{Wijsen09}.

In this paper, we investigate the algorithmic aspects of $\cert{q}$, where $q$ is a Boolean conjunctive query over a single binary relation (hence, the query has self-joins, provided it has at least two atoms). In other words, we investigate the complexity of computing the consistent answers of arbitrary  Boolean conjunctive queries on directed graphs.  Our main focus is on the case in which there is a single key constraint, i.e., we focus on Boolean conjunctive queries over a single binary relation in which one of the attributes is a key. We show that if $q$ is such a conjunctive query, then either $\cert{q}$ is $\FO$-rewritable, or $\cert{q}$ is in $\PTIME$, but it  is not $\FO$-rewritable.  In addition, we characterize when each of these two cases occurs. More precisely, we first point out that every Boolean conjunctive query $q$  over a binary relation and with one of its attributes as a key is equivalent to either a  path query  or a collection of disjoint  cycles. We then show that if $q$ is a  path query or the query ``there is a self-loop", then $\cert{q}$ is $\FO$-rewritable; in contrast, if $q$ is a collection of disjoint  cycles each of length at least $2$, then $\cert{q}$ is in $\PTIME$, but it is not $\FO$-rewritable.

It should be pointed out that Maslowski and Wijsen \cite{MaslowskiW14} have established a dichotomy theorem for the problem $\#\cert{q}$ of counting the number of subset repairs satisfying a Boolean conjunctive query $q$ that may contain self-joins: for each such query $q$, either $\#\cert{q}$ is in $\FP$ (the class of polynomial-time solvable counting problems), or $\#\cert{q}$ is $\sP$-complete. When this result is applied to the case of Boolean conjunctive queries over a single binary relation, then it is not hard to verify  that $\#\cert{q}$ is in $\FP$ only when $q$ is equivalent to one of the queries, ``there is a path of length $1$",  ``there is a path of length 2", ``there is a self-loop"; for all other queries $q$, it turns out that $\#\cert{q}$ is $\sP$-complete. Thus, for Boolean conjunctive queries $q$  over a single binary relation, the dividing line between $\FO$-rewritability and $\PTIME$-computability for $\cert{q}$ is substantially different from the dividing line between membership in $\FP$ and $\sP$-completeness for $\#\cert{q}$.

\eat{
 Finally, we examine $\cert{q}$ when $q$ is a Boolean conjunctive query over a single binary relation such that each of its two attributes is a key.  In this setting, we show that if $q$ is a simple path, then $\cert{q}$ is $\coNP$-complete; in contrast, if $q$ is a disjoint union of cycles, then $\cert{q}$ is $\FO$-rewritable.}

 \section{Preliminaries}
 In general, a \emph{relational database schema} or, simply, a \emph{schema} is a finite collection $\bf R$ of relation symbols, each with an associated arity. Here, we will consider a  schema $\bf R$ consisting of a single binary relation $R$. We will review some of the basic notions of relational database theory for this particular setting.

A {\em relational database instance} over $\bf R$ or, simply, an \emph{instance} over $\bf R$ is a binary relation, which, for notational simplicity, we will also denote by  $R$. A \emph{fact} is an expression $R(a,b)$, where $a$ and $b$ are values such that $(a,b)\in R$. An instance over $\bf R$ can be thought of as a graph such that there is an edge from a node $a$ to a node $b$ if $R(a,b)$ is a fact of $R$.

We assume that the relation symbol $R$ has a single \emph{key} and that, actually, the first attribute of $R$ is a \emph{key}. A \emph{consistent instance} or a \emph{consistent graph} is a binary relation $R$ that  satisfies the key constraint, i.e., it does not contain two facts of the form $R(a,b)$ and $R(a,b')$ with $b \not = b'$.  An \emph{inconsistent instance} or an \emph{inconsistent graph} is a binary relation $R$ that violates the key constraint, i.e., $R$ contains two facts $R(a,b)$ and $R(a,b')$ with $b \not = b'$.

A \emph{subset repair} or, simply, a \emph{repair} of an instance $R$ is a maximal consistent sub-instance of $R$; in other words,  a repair of $R$  is  an instance $R'\subseteq R$ that satisfies the key constraint and such that there is no instance $R''$ with the property that $R' \subset R'' \subseteq R$ and $R''$ satisfies the key constraint.

 Let $q$ be a boolean query over the schema $\bf R$.
\squishlist
     \item $\cert{q}$ is the following decision problem: given an instance $R$, is $q$ true on every repair of $R$?
     \item $\#\cert{q}$ is the following counting problem: given an instance $R$, find
the number of repairs of $R$ that satisfy $q$.
\squishend

In this paper, we focus on conjunctive queries.  By definition,
 a \emph{conjunctive query} over a schema $\bf R$  is a first-order formula built from atomic formulas of $\bf R$ using conjunction  and
 existential quantification. If $\bf R$ consists of a single binary relation $R$, then  every conjunctive query is logically equivalent to an expression
 of the form
$q(\mathbf z)=\exists \mathbf{w}(R(\mathbf {x_1})\wedge ... \wedge R(\mathbf {x_m}))$, where
 each $\mathbf{x}_i$ is a pair of variables, $\mathbf{z}$ and $\mathbf{w}$ are tuples
  of variables, and the variables in $\mathbf {x_1},\ldots,\mathbf {x_m}$ appear in exactly one of $\mathbf{z}$ and $\mathbf{w}$. A \emph{boolean} conjunctive query is a conjunctive query in which
all variables are existentially quantified, i.e., $\mathbf{z}$ is the empty tuple.

 The {\em canonical database} of a  boolean conjunctive query $q$ is the instance $D^q$ obtained by viewing each variable in the query as a distinct value and each atom  as a fact of $D^q$. For example, if
    $q$ is the boolean conjunctive query $\exists x, y,z (R(x,y)\wedge R(y,z) \wedge R(z,x)$, then
    $D^q$ consists of the facts $R(x,y)$, $(y,z)$, $R(z,x)$.

  Two conjunctive queries $q$ and $q'$ are {\em equivalent} if for every  instance $R$, we have that
  $q(R)=q'(R)$.
   Starting with the work of Chandra and Merlin  \cite{ChandraM77}, there has been an extensive study of conjunctive-query equivalence and  minimization.
 A conjunctive query $q$ is {\em minimized} if there is no other  conjunctive query $q'$ which  is equivalent to $q$ and has
  fewer atoms in its definition than $q$ has.
  It is well known that every conjunctive query is equivalent to a unique (up to a renaming of the variables) minimized conjunctive query.  In terms of canonical databases, if we view the canonical database of a boolean conjunctive query $q$ as a graph $G$, then the canonical database of the minimized query $q'$ is the \emph{core} of the graph $G$, that is to say, a subgraph $G'$ of $G$ such that there is a homomorphism from $G$ to $G'$, but no homomorphism from $G$ to a proper subgraph $G''$ of $G$ (recall that a homomorphism from $G$ to $G'$ is a mapping $h$ from the nodes of $G$ to the nodes of $G'$ such that if $(u,v)$ is an edge of $G$, then $(h(u),h(v))$ is an edge of $G'$).

Two conjunctive queries $q$ and $q'$ are {\em equivalent under the key constraint} of the binary relation $R$
    if for every consistent instance $R$,  we have that
  $q(R)=q'(R)$.  Clearly, if $q$ and $q'$ are equivalent under the key constraint of $R$, then $\cert{q}$ coincides with $\cert{q'}$; similarly, $\#\cert{q}$ coincides with $\#\cert{q'}$.
   Conjunctive query  equivalence under various integrity constraints has been investigated in  various settings in the past (see, e.g.,
   \cite{CDGL98,AhoSU79,JohnsonK84}).
%

 \section{Conjunctive-Query Equivalence under a Key Constraint}

 We will analyze conjunctive-query equivalence under the key constraint of the binary relation symbol $R$. For example, consider the conjunctive query
$\exists x, y, z(R(x,y)\wedge R(x,z) \wedge R(y,z))$, where the first attribute of  $R$ is a key. Observe that, if this query
evaluates to true on a consistent instance, then the variables $y$ and $z$ must be instantiated to the same value.
Hence, under the key constraint, this query is equivalent to
$\exists x, y, z(R(x,y)\wedge R(x,y) \wedge R(y,y))$,
which, in turn,  is equivalent to $\exists x, y(R(x,y)\wedge R(y,y))$.
 We shall show that every boolean conjunctive query is equivalent under  the key constraint to a  boolean conjunctive query that has a rather simple form. As a first step, we analyze the structure of consistent instances.

\begin{proposition} \label{instance-structure-prop}
\label{thm-cons} Let $\bf R$ be a schema consisting of a single binary relation symbol with the first attribute as key. An instance $R$ is consistent if and only if $R$, when viewed as a graph, is the union of
a forest of trees oriented from the leaves to the root and of simple cycles whose nodes either are not in the forest or are roots of some trees of the forest.
\end{proposition}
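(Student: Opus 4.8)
The plan is to translate the key constraint into a purely graph-theoretic condition and then read off the stated decomposition. The crucial first observation is that, since the first attribute of $R$ is a key, an instance $R$ is consistent if and only if, viewed as a graph, every node has \emph{out-degree at most one}: a node $a$ has two distinct outgoing edges precisely when $R$ contains two facts $R(a,b)$ and $R(a,b')$ with $b\neq b'$. Thus the statement reduces to characterizing the finite directed graphs in which every node has out-degree at most one.

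For the nontrivial direction, I would assume $R$ is consistent and first isolate its cyclic part: say a node lies \emph{on a cycle} if, starting from it and repeatedly following the unique outgoing edge, one eventually returns to it. Because each node has at most one outgoing edge, the sequence of successors of any node is uniquely determined, so any two cycles sharing a node must coincide; hence the cycles of $R$ are vertex-disjoint simple cycles, and each cycle node has its single outgoing edge lying on its own cycle. Next I would delete all cycle edges and argue that the remaining graph $F$ is a forest of trees oriented from the leaves to the root. Every node still has out-degree at most one in $F$ (cycle nodes now have out-degree zero), and $F$ contains no directed cycle, since any such cycle would already be a cycle of $R$ whose edges have been deleted.

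The main technical point is then to show that each weakly connected component of $F$ is a tree with a single root, which I expect to handle by an edge count: in a component with $n$ nodes, every node of out-degree one contributes exactly one edge, so the number of edges equals $n-k$, where $k$ is the number of nodes of out-degree zero. Weak connectivity forces at least $n-1$ edges, whence $k\le 1$; and $k\ge 1$, since following outgoing edges in a finite acyclic graph must terminate at a node of out-degree zero. Therefore $k=1$, the component has exactly $n-1$ edges and one such node, so it is a tree oriented toward that node, its root. Finally, since each cycle node is a node of out-degree zero in $F$, it is the root of its tree in $F$, a tree that is trivial precisely when the cycle node has no incoming non-cycle edges, in which case the node is simply not in the forest; this yields exactly the claimed decomposition.

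For the converse, I would verify directly that a graph of the stated form has out-degree at most one at every node, hence is consistent: a non-root node of a forest tree points only to its parent; a root not lying on a cycle has out-degree zero; a node on a simple cycle has a single outgoing cycle edge; and a cycle node that is also a root of a tree has out-degree zero within the forest and out-degree one along the cycle, for a total of one. In every case no node has two distinct successors, so the key constraint holds. I expect the only delicate steps to be the disjointness and simplicity of the cycles and the edge-count argument showing that the acyclic remainder is genuinely a forest with one root per component; the remainder is routine bookkeeping.
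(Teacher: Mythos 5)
Your proof is correct and follows essentially the same route as the paper's: reduce the key constraint to ``out-degree at most one,'' observe that cycles are vertex-disjoint and self-contained, and conclude that the acyclic remainder is a forest oriented toward the roots. The only difference is that you supply an explicit edge-counting argument for the forest claim, which the paper simply asserts (``a directed acyclic graph is a consistent instance if and only if it is a forest of trees oriented from the leaves to the root''), so your write-up is a somewhat more detailed version of the same proof.
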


\begin{proof} The direction from right to left is obvious.
 For the other direction, suppose that $R$ is a consistent instance.
 Let $C$ be a simple cycle of $R$.  If $v$ is a node on $c$, then the only
 outgoing edge from $v$ is the edge that goes to the next node on $C$ (otherwise, $R$ is inconsistent). Thus, there are no edges from a node of $C$ to some node outside $C$.
 Moreover, a directed acyclic graph is a  consistent instance  if and only if it is a forest of trees oriented from the leaves to the root.
It follows that $R$  consists  of a set of disjoint simple cycles and a set  of disjoint trees oriented from the leaves to the root, where the root of such a tree may possibly also be on one of the cycles.
\end{proof}

Let us return to the boolean conjunctive query
$\exists x, y, z(R(x,y)\wedge R(x,z) \wedge R(y,z))$. As seen earlier, this query is equivalent under the key constraint to the boolean conjunctive query $\exists x, y(R(x,y)\wedge R(y,y))$. The canonical database of the latter consists of the facts $R(x,y)$ and $R(y,y)$, hence its core consists of just the fact $R(y,y)$. It follows that
$\exists x, y, z(R(x,y)\wedge R(x,z) \wedge R(y,z))$ is equivalent under the key constraint to the existence-of-a-self-loop query
$\exists y R(y,y)$. It turns out that, by first applying repeatedly the key constraint and then minimizing, every boolean conjunctive query is equivalent to one that has a simple structure.

\begin{definition} \label{simple-defn}
Let $\bf R$ be a schema consisting of a single binary relation symbol.
\squishlist
\item For every $n\geq 2$, we write $n\Path$ to denote the  boolean conjunctive query that asserts the existence of a  path of length $n$, i.e., $n\Path$ is of the form.
$\exists x_1,\ldots,x_n (R(x_1,x_2)\wedge \cdots \wedge R(x_{n-1},x_n))$.

 We say that a boolean conjunctive query is a \emph{simple path} query  if it is the  $n\Path$ query, for some $n\geq 2$.
\item For every $n\geq 1$, we write $n\Cycle$ to denote the  boolean conjunctive query that asserts the existence of a simple  cycle  of length $n$, i.e., $n\Cycle$ is of the form
    $\exists x_1,\ldots,x_n (R(x_1,x_2)\wedge \cdots \wedge R(x_{n},x_1))$.

    We say that a boolean conjunctive query is a \emph{cycle} query  if it is the  $n\Cycle$ query, for some $n\geq 1$. We also say that a boolean conjunctive query is a \emph{disjoint collection of simple cycles} if it is the conjunction of simple cycle queries with no variables in common.   
    
    For example, the query
    $\exists x_1,\ldots,x_5(R(x_1,x_2) \wedge R(x_2,x_1) \wedge R(x_3,x_4)\wedge R(x_4,x_5)\wedge R(x_5,x_3))$ is the disjoint collection of the $2\Cycle$ query and the $3\Cycle$ query.

\squishend
\end{definition}

\begin{theorem} \label{query-structure-thm}
Let $\bf R$ be a schema consisting of a single binary relation symbol with the first attribute as key. Every
 boolean conjunctive query over $\bf R$ is equivalent under the key constraint either to a  path query or to a query that is a disjoint collection of  cycles such that the length of each cycle in the collection does not divide the length of any other cycle in the collection. Moreover, there is a polynomial-time algorithm that, given a boolean conjunctive query over $\bf R$, decides which of these two cases holds.
\end{theorem}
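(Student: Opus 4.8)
The plan is to recast the whole question in terms of homomorphisms: a boolean conjunctive query $q$ with canonical database $G=D^q$ is true on an instance $R$ precisely when there is a homomorphism $G\to R$. Since the first attribute of $R$ is a key, a consistent instance is exactly a graph in which every node has out-degree at most one (a \emph{functional} graph), as made explicit by Proposition~\ref{instance-structure-prop}. The first step is to run the key constraint as a chase. Let $\sim$ be the least equivalence relation on the variables of $q$ such that whenever $R(x,y)$ and $R(x',y')$ are atoms with $x\sim x'$ we also force $y\sim y'$, and let $H$ be the quotient of $G$ by $\sim$. By construction every class of $H$ has out-degree at most one, so $H$ is itself a functional graph. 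I would then show that $q$ is equivalent \emph{under the key constraint} to the query $q_H$ whose canonical database is $H$: the quotient map $G\to H$ is a homomorphism, which gives one direction for \emph{every} $R$, while for a consistent (hence functional) $R$ any homomorphism $G\to R$ is forced to collapse $\sim$-related variables to the same value and therefore factors through $H$, giving the other direction.

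Next I would analyze the finite functional graph $H$. A counting argument (a weakly connected functional graph on $k$ nodes has between $k-1$ and $k$ edges, with a unique cycle exactly in the latter case) shows that each connected component is either a tree oriented toward a single sink or a ``rho'': one simple cycle with in-trees feeding into it. This is the finite analogue of Proposition~\ref{instance-structure-prop}. I would then compute cores component by component: a tree oriented toward its root retracts onto its longest root-to-leaf path by sending each node to the path vertex at the same distance from the root, and a rho retracts onto its cycle $C_\ell$ by sending a node that enters the cycle after $d$ steps to the cycle vertex $d$ positions earlier. Hence $q_H$ is equivalent to a conjunction of simple path queries $n\Path$ and cycle queries $\ell\Cycle$.

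It remains to collapse this conjunction to a single path query or a disjoint collection of cycles, which I would do with three elementary homomorphism facts, each valid over all instances and hence a fortiori under the key constraint. First, a longer path contains a shorter one, so a conjunction of pure path queries reduces to the one of maximum length. Second, a homomorphism $C_a\to C_b$ exists iff $b\mid a$, so $a\Cycle$ implies $b\Cycle$ precisely when $b\mid a$; deleting every cycle whose length is a multiple of some other present length leaves a collection in which no length divides another. Third, and this is the fact that rules out a mixed answer, any cycle query produces walks of every length, so a single cycle implies every path query; thus as soon as one cycle component is present, all path components become redundant. Combining these: if $H$ is acyclic the normal form is the single longest $n\Path$, and otherwise it is the divisibility-reduced disjoint collection of the cycles of $H$ (equivalently, the normal form is exactly the core of $H$).

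For the algorithmic claim, $\sim$ and thus $H$ can be produced by a union-find chase (at most one union per variable, each triggering a linear scan of the atoms), and the two cases are then distinguished simply by testing whether $H$ contains a directed cycle, both in polynomial time. I expect the conceptual crux to be the first step, namely recognizing that the key constraint forces exactly the identifications recorded by $\sim$ and so reduces the problem to homomorphisms into functional graphs; the main technical obstacle is the argument of the third paragraph, where one must prove that cores of functional graphs never mix paths with cycles (cycles absorb all paths) and that the surviving cycles form precisely a divisibility-antichain, for which the characterization $C_a\to C_b \iff b\mid a$ is the key ingredient.
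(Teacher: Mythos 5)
Your proposal is correct and follows essentially the same route as the paper: the equivalence relation $\sim$ is exactly the paper's ``finest partition of the variables'' chase, the structural analysis of the resulting functional canonical database is Proposition~\ref{instance-structure-prop} specialized to the finite case, and the final normal form is obtained by taking the core. You merely make explicit several steps the paper asserts without detail (that trees retract onto their longest leaf-to-root path, that $C_a\to C_b$ iff $b\mid a$, and that any cycle absorbs all path conjuncts), which is a welcome elaboration rather than a different argument.
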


\begin{proof} Let $q$ be a boolean conjunctive query over $\bf R$.
First, form the finest partition of the variables of $q$ such that
if we replace all variables in a single part of the partition with a fresh variable, then the
canonical database $D^p$ of the resulting boolean conjunctive query $p$ is a consistent instance.
Intuitively, this is achieved by considering all atoms with the same variable, say $x$, in the first attribute and by replacing all occurrences of variables that appear in the second attribute of these atoms with the same fresh variable $x_f$.  Clearly, $q$ is equivalent under the key constraint to $p$.
Since the canonical database $D^p$ of $p$ is a consistent instance, the preceding Proposition \ref{instance-structure-prop} implies that $D^p$ is  the union of
a forest of trees oriented from the leaves to the root and of simple cycles whose nodes either are not in the forest or are roots of some trees of the forest. If $D^p$ is actually an acyclic graph, then the core of $D^p$ is a simple path, hence $q$ is equivalent under the key constraint to a  path query. If $D^p$ contains at least one cycle, then its core is a collection of disjoint cycles such that the length of each cycle in the collection does not divide the length of any other cycle in the collection (note that every tree can be homomorphically mapped to any cycle).  It follows that, in this case, $q$ is equivalent under the key constraint to a disjoint collection of cycles such that the length of each cycle in the collection does not divide the length of any other cycle in the collection.
\end{proof}

\eat{
The definition below can be thought of as the analogous of ``minimized conjunctive  query''.
\begin{definition}
A Boolean conjunctive query $q$ is {\em dependency-reduced} wrto a specific constraint if  its canonical database is consistent and there is no  other Boolean conjunctive query $q'$ which is equivalent to
$q$ and has fewer atoms than $q$ in its definition.
\end{definition}

\begin{example}
We consider the query we mentioned before:
$q:~R(\underbar{x},y),R(\underbar{x},z),R(\underbar{y},z)$.
This is not dependency-reduced because its canonical database (which is a database with the three
facts $\{ R(x_0,y_0), R(x_0,z_0), R(y_0,z_0) \}$) is not consistent.
The query
$q':~R(\underbar{x},y),R(\underbar{y},y)$ is equivalent under single key constraint to $q$. Moreover
 $q'$ is dependency-reduced because its canonical database  is consistent and it is minimized.
\end{example}
The following theorem 1) says that any CQ has a unique (up to isomorphism) equivalent query which is dependency-reduced and 2) it characterizes all dependency-reduced queries.
\begin{theorem}
1. For any CQ $q$, there is a unique $Q'$ which is dependency-reduced and is equivalent to $q$ over
all databases that satisfy the single key constraint.

2. Any CQ  $q$ which is dependency-reduced wrto to a single key constraint   is
isomorphic to either
 a simple path or a disjoint union  of simple cycles.
\end{theorem}
\begin{proof}
1.
We form the finest partition of the variables, such that if we replace all variables in a single class of the partition with a fresh variable then the
canonical database/graph (of the, thus, formed query) is consistent. We show by induction on the number of atoms that this partition is unique. For one atom, it is unique trivially. Suppose we have a unique partition for $m$ atoms. Then, for $m+1$ atoms, if the
new atom is $R(a,b)$ where $a$ is a variable that already appears in the $m$ atoms,  we add the new variable $b$ in the class where another variable $c$ belongs where $c$ is such that we have $R(a,c)$. Otherwise
we begin a new class for any new variable in the ($m+1$)-th atom.

2. The canonical database of a  dependency-reduced CQ $q$ is definitely a consistent graph.
From Theorem~\ref{thm-cons}, we see that a consistent graph is the union of  a) a forest of trees with their roots being sinks and  b) simple cycles whose nodes are either not in the forest or are sinks of some trees of the forest.  If we take the core (or, in other words minimize the query) then
we get either a simple path (if there are no cycles, then the core of a tree is a simple path) or a collection
of disjoint cycles (the core of each tree is a path, which however homomorphically maps on any cycle, thus
only simple cycles remain in the core of the query).
\end{proof}
}

\section{First-Order Rewritability}
\label{FO-sec}

 Let $q$ be a boolean conjunctive query over some relational schema $\bf S$.  We say that
$\cert{q}$ is \emph{$\FO$-rewritable} if there is a boolean first-order query over $\bf S$ such that for every instance $I$ of $\bf S$, we have that every repair of $I$ satisfies $q$ if and only if $I$ satisfies $q'$.
 For self-join-free conjunctive queries,  a systematic study of when $\cert{q}$ is  $\FO$-rewritable  was carried out first by Fuxman and Miller \cite{FuxmanM07} and then by Wijsen \cite{Wijsen12}. In this section, we
  obtain the following characterization of  $\FO$-rewritability of boolean conjunctive queries over a schema consisting of a single binary relation with a single key constraint.

\begin{theorem} \label{FO-thm}
 Let $\bf R$ be a schema consisting of a single binary relation symbol with the first attribute as the key.
If $q$ is boolean conjunctive query over $\bf R$, then the following two statements are equivalent.
\squishlist
\item  $\cert{q}$ is $\FO$-rewritable.
\item  $q$ is equivalent under the key constraint to the $1\Cycle$ query or to a path query.
\squishend
\end{theorem}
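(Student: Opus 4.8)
The plan is to use Theorem~\ref{query-structure-thm} to reduce the statement to three cases and then argue each direction. By that theorem, $q$ is equivalent under the key constraint either to a path query or to a disjoint collection of cycles in which no cycle length divides another; since $1$ divides every integer, as soon as $1\Cycle$ occurs in such a collection the collection must be exactly $1\Cycle$. Hence, up to equivalence under the key constraint, $q$ is one of: (i) a path query, (ii) the $1\Cycle$ query, or (iii) a disjoint collection of cycles each of length at least $2$. Because equivalence under the key constraint preserves $\cert{\cdot}$, it suffices to prove $\FO$-rewritability in cases (i) and (ii) and non-$\FO$-rewritability in case (iii); the latter is exactly the contrapositive of the forward implication, since ``not (i) and not (ii)'' means precisely (iii).

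For the positive cases I would exploit the description of repairs implicit in Proposition~\ref{instance-structure-prop}: as the key is the first attribute, a repair is obtained by independently keeping, at each node with at least one out-edge, exactly one of its out-edges, so a repair is a ``choice function'' whose graph has out-degree at most one. For the $1\Cycle$ query I claim the rewriting is $\exists x\,(R(x,x)\wedge \forall y\,(R(x,y)\to y=x))$: every repair contains a self-loop iff some node's \emph{only} out-edge is a self-loop, for if such a node exists every repair must keep that edge, and otherwise choosing at each self-looped node some other out-edge produces a self-loop-free repair. For the $n\Path$ query I would track, for each node $x$, the value $\mathrm{rank}(x)$ equal to the length of a shortest directed path from $x$ to a sink (a node with no out-edge), with $\mathrm{rank}(x)=\infty$ if no sink is reachable. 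Since the per-node choices are independent, an adversary trying to keep the longest path short does best by always following an edge toward a nearest sink, whence the minimum over repairs of the longest-path length equals $\max_x \mathrm{rank}(x)$; thus every repair has a path of length $n$ iff some node has $\mathrm{rank}\ge n$. The predicate ``$\mathrm{rank}(x)\ge k$'' is first-order via the recursion ``$x$ is not a sink and every out-neighbour has rank $\ge k-1$'', so $\cert{n\Path}$ is $\FO$-rewritable by $\exists x\,\varphi_n(x)$, where $\varphi_0=\top$ and $\varphi_{k}(x)=\exists y\,R(x,y)\wedge\forall z\,(R(x,z)\to\varphi_{k-1}(z))$.

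The hard direction is case (iii), which I would attack by an Ehrenfeucht--Fra\"iss\'e / Hanf-locality argument. For every $m$ I would construct two instances $A_m$ and $B_m$ such that every repair of $A_m$ satisfies $q$ (a ``yes'' instance), some repair of $B_m$ fails $q$ (a ``no'' instance), yet $A_m$ and $B_m$ agree on all first-order sentences of quantifier rank $m$; since any candidate first-order rewriting has a fixed quantifier rank, it cannot separate $A_m$ from $B_m$ for that $m$. The instances would be long cyclic chains of identical local gadgets in which avoiding the target cycle inside a repair amounts to a globally consistent ``orientation'' of the chain (for $2\Cycle$ this is literally the requirement that no bidirectional pair have both its edges chosen). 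The gadget is designed so that these orientation constraints are frustrated precisely when a global parameter — the number of gadgets read modulo a quantity depending on the cycle length — lies in a forbidden residue class: in the frustrated case every repair is forced to close a directed cycle of the required length (``yes''), and otherwise a single long consistent orientation avoids it (``no''). Choosing the two chains to share the same multiset of local neighbourhoods but to sit in different residue classes keeps them $\equiv_m$ while flipping the certain-answer status. To pass from a single $n\Cycle$ to a disjoint collection $n_1\Cycle\wedge\cdots\wedge n_k\Cycle$, I would attach to both $A_m$ and $B_m$ the same rigid, choice-free disjoint cycles of lengths $n_2,\dots,n_k$, so those conjuncts hold in every repair and $\cert{q}$ collapses, on these instances, to the single-cycle analysis for $n_1$.

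The main obstacle is exactly this gadget: I must engineer a local fragment whose every repair either propagates or inverts an ``orientation bit'', so that global consistency of the chain encodes a residue/parity condition, while simultaneously arranging that all $r$-neighbourhoods (for $r$ up to the Hanf radius $\approx 3^m$) are independent of the global parameter, so that $A_m$ and $B_m$ are genuinely Hanf-equivalent. Two further points need care. First, the entire analysis must use that repairs are independent per-node choices rather than the self-join-free machinery of the earlier literature, since here $q$ has self-joins. Second, forcing a simple cycle of length \emph{exactly} $n$ (as opposed to merely ``some'' cycle) is more delicate than the $n=2$ case, where a forced bidirectional pair suffices; I would control this by making the gadget break the chain into cycles whose lengths are pinned to $n$, and by keeping the rigid auxiliary cycles in separate components so they neither destroy the forced $n_1$-cycle nor accidentally supply it.
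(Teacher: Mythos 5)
Your reduction via Theorem~\ref{query-structure-thm} to the three cases, your rewriting for the $1\Cycle$ query, and your ``rank to the nearest sink'' rewriting for $n\Path$ are all sound and match the paper (your $\exists x\,\varphi_n(x)$ is, after unfolding, the paper's inductively defined $\psi_n$; the observation that repairs are independent per-node choices of one out-edge is exactly what makes the greedy ``head for the nearest sink'' repair legitimate). The problem is the hard direction, case (iii), where you do not actually have a proof: you describe a gadget you would need and then explicitly name its construction as ``the main obstacle.'' Worse, the specific mechanism you propose --- a cyclic chain of gadgets whose repairs encode a global orientation bit, frustrated according to the chain length modulo some residue --- does not work in its natural form. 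For $2\Cycle$, the natural cyclic necklace of doubled edges $u_1\leftrightarrow u_2\leftrightarrow\cdots\leftrightarrow u_k\leftrightarrow u_1$ has a $2$-cycle-free repair for \emph{every} $k>2$ (orient every node clockwise), so there is no residue class in which the orientation constraints are frustrated; the frustration has to come from somewhere other than the length of the chain.

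The paper's construction supplies exactly the missing source of frustration, and it is not a parity argument. It uses a \emph{linear} (open) chain of doubled edges: the two endpoint nodes have out-degree one, so their edges are forced into every repair, and avoiding a $2$-cycle then forces a propagation of choices from each end inward that must collide --- hence every repair of a bare chain contains a $2$-cycle. A single extra \emph{outgoing} edge at an interior node releases this forcing (the node can escape to the fresh neighbour and the two propagation fronts can pass each other), whereas extra \emph{ingoing} edges change nothing. The two EF-indistinguishable instances are then obtained not by varying a length modulo $n$ but by redistributing the same multiset of in/out ``special'' nodes between two components: $D_1$ pairs a chain with two outgoing specials against a chain with two ingoing specials (the latter forces a $2$-cycle in every repair, so $D_1$ is a ``yes'' instance), while $D_2$ uses two chains each with one ingoing and one outgoing special (each admits a cycle-free repair, so $D_2$ is a ``no'' instance); with all special nodes pairwise far apart the Duplicator wins the $m$-move game. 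The analogous chain of consecutive triangles handles $n\Cycle$ for $n\geq 3$, and disjoint collections are handled, as you guessed, by adjoining the same rigid cycles $C_2,\dots,C_k$ to both instances (non-divisibility of the lengths guaranteeing they do not accidentally supply the $C_1$-conjunct). Until you either adopt a construction of this kind or actually exhibit a gadget realizing your residue scheme, the forward implication of the theorem is unproved.
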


We will first show that if $q$ is the $1\Cycle$ query or a path query, then $\cert{q}$ is $\FO$-rewritable.
For the  $1\Cycle$ query $\exists xR(x,x)$, it is easy to verify that the sentence $\exists x(R(x,x) \wedge
\forall y (x\not = y \rightarrow \neg R(x,y)))$ is a first-order rewriting of $\cert{1\Cycle}$. Indeed, if an instance $R$ satisfies the preceding sentence, then there is a node $a$ such that the only edge coming out of $a$ is the self-loop $R(a,a)$. Hence, every repair of $R$ must contain the fact $R(a,a)$, which means that every repair of $R$ contains a self-loop.  Conversely, if every repair of $R$ contains a self-loop, then $R$ must satisfy
the sentence $\exists R(x,x) \wedge
\forall y (x\not = y \rightarrow \neg R(x,y)))$, since, otherwise, we could construct a repair $R'$ of $R$ that contains no self-loops, since, for every node $a$ such that $R(a,a)$ is a fact of $R$, there is a node $b\not =a $ such that $R(a,b)$ is a fact of $R$, and we can form the desired repair $R'$ by putting such facts $R(a,b')$ in it.

As regards to path queries,
 note that Fuxman and Miller \cite{FuxmanM07} identified a class,   called  ${\mathcal C}_{forest}$, of self-join-free conjunctive queries and showed that if $q$ is a query in ${\mathcal C}_{forest}$, then $\cert{q}$ is $\FO$-rewritable. The class ${\mathcal C}_{forest}$ includes as a member every query $q_n$, $n\geq 2$, of the form $\exists x_1 \ldots x_n(S_1(x_1,x_2)\wedge S_2(x_2,x_3) \wedge \cdots \wedge S_{n-1}(x_{n-1},x_n))$, where  the relation symbols $S_j$  are distinct. In general, the first-order rewriting algorithm for queries in ${\mathcal C}_{forest}$ fails if it is applied to conjunctive queries with self-joins. It can be shown, however, that this algorithm produces a correct first-order rewriting when applied to the queries $n\Path$, $n\geq 2$. Here, we give a direct proof of this result.

\begin{theorem} \label{Path-thm}
Let $\bf R$ be a schema consisting of a single binary relation symbol with the first attribute as the key.
If $q$ is a path query, then $\cert{q}$ is $\FO$-rewritable.
\end{theorem}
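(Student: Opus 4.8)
The plan is to reduce to the case $q = n\Path$ and to exhibit an explicit first-order rewriting built by a bounded recursion on the structure of repairs. By definition a path query is $n\Path$ for some $n\geq 2$, so it suffices to rewrite $\cert{n\Path}$. Recall that, because the first attribute is a key and a repair is a maximal consistent subinstance, every repair $J$ of an instance $R$ is a \emph{functional graph}: each node keeps at most one of its outgoing edges of $R$ (exactly one when it has any), and these choices are made independently across nodes. Consequently $J$ satisfies $n\Path$ if and only if $J$ contains a directed walk with $n-1$ edges (the variables of $n\Path$ need not be distinct), i.e.\ some node of $J$ has a forward orbit of length at least $n-1$. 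Thus $\cert{n\Path}$ asks whether \emph{every} repair of $R$ contains such a walk.

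For $k\geq 0$ I will define a formula $\psi_k(x)$ meant to express ``in every repair there is a directed walk with $k$ edges starting at $x$''. I set $\psi_0(x):=(x=x)$ and
\[
 \psi_{k+1}(x)\ :=\ \exists y\, R(x,y)\ \wedge\ \forall y\,(R(x,y)\rightarrow \psi_k(y)).
\]
Each $\psi_k$ is first-order, and since I only need $k\leq n-1$, the recursion unfolds into a single fixed first-order formula. The claim to be proved is that $\psi_k$ captures the intended property and that the sentence $\exists x\,\psi_{n-1}(x)$ is a correct rewriting of $\cert{n\Path}$.

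First I will verify that $\psi_k$ means what it should. The inclusion from ``$x$ has an outgoing edge and all its out-neighbors carry a certain $k$-walk'' to ``$x$ carries a certain $(k{+}1)$-walk'' is immediate: in any repair the chosen successor $y_0$ of $x$ satisfies $R(x,y_0)$ and, by hypothesis, carries a $k$-walk, which I extend by the edge $x\to y_0$. The converse is the delicate step and the main obstacle. Assume some out-neighbor $y_0$ of $x$ fails $\psi_k$, witnessed by a repair $J'$ with no $k$-walk from $y_0$; then the forward orbit of $y_0$ in $J'$ is acyclic of length at most $k-1$. I distinguish two cases. If $x$ lies on this orbit, then the orbit of $x$ is a suffix of it and is already too short, so $J'$ itself witnesses $\neg\psi_{k+1}(x)$. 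Otherwise I re-route $x$ to $y_0$, obtaining a repair $J''$ agreeing with $J'$ except at $x$; since $x$ is not reachable from $y_0$ in $J'$, the orbit of $y_0$ is unchanged, hence the orbit of $x$ in $J''$ is $x\to y_0\to\cdots$ of length at most $k<k+1$, so again no $(k{+}1)$-walk starts at $x$. This case split is exactly what prevents the re-routing from accidentally closing a cycle, which would create arbitrarily long walks and destroy the argument.

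Finally I will establish the equivalence of $\cert{n\Path}$ with $\exists x\,\psi_{n-1}(x)$. One direction is clear: if some $x$ satisfies $\psi_{n-1}$, then every repair has an $(n-1)$-walk from $x$, so $\cert{n\Path}$ holds. For the converse I argue contrapositively. Suppose no node satisfies $\psi_{n-1}$. Assign to each node $x$ its level $\ell(x)$, the largest $k$ with $\psi_k(x)$; by the recursion $\ell(x)=0$ at sinks and $\ell(x)=1+\min\{\ell(y): R(x,y)\}$ otherwise, and the assumption forces $\ell(x)\leq n-2$ for all $x$. Now form the repair that sends each non-sink $x$ to an out-neighbor of minimum level; along every edge of this repair the level strictly decreases, so every forward orbit has at most $n-2$ edges and the repair contains no $(n-1)$-walk, falsifying $n\Path$ and hence $\cert{n\Path}$. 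Combining the two parts shows that $\exists x\,\psi_{n-1}(x)$ is a first-order rewriting of $\cert{n\Path}$. The principal difficulty is the converse of the recursion in the previous paragraph; the final step relies on the additional observation that the independence of the per-node choices lets the level assignment be realized by one global repair.
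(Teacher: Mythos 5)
Your proposal is correct and follows essentially the same route as the paper: your recursively defined $\psi_k(x)$ with the clause $\forall y\,(R(x,y)\rightarrow\psi_{k-1}(y))$ is, up to logical equivalence and an indexing shift, exactly the paper's inductively built rewriting $\psi_n$ of $\cert{n\Path}$. The only difference is one of rigor rather than of method: you prove the semantic invariant for all $k$ by induction (including the orbit case analysis for the re-routing step and the level-function construction of a single falsifying repair), whereas the paper verifies $n=2$ in detail and only sketches the general inductive step.
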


\begin{proof}
We begin by giving the first-order rewriting of $\cert{2\Path}$. Let $\psi_2$ be the  first-order sentence
$$\exists x,y,z [ R(x,y) \wedge R(y,z) \wedge \forall y(  R(x,y)\rightarrow \exists zR(y,z)  )           ].$$
We claim that $\psi_2$ is a first-order rewriting of $\cert{2\Path}$.
Intuitively, $\psi_2$ asserts that
  there is a path of length $2$  in the database and, moreover,  whenever
we replace in some repair the first edge of this path with another edge whose endpoint is a node $u$, then there is an edge starting from this node $u$. This ensures that every repair contains a path of length $2$.

 More formally, suppose first that an instance $R$ satisfies $\psi_2$, and that $R'$ is a repair of $R$. Then there are nodes $a$, $b$, $c$ such that $R(a,b)$ and $R(b,c)$ are facts of $R$. It follows that $R'$ must contain a fact of the form $R(a,b')$ for some node $b'$. Since $R$ satisfies $\psi_2$, there is a node $c'$ such that $R(b',c')$ is a fact of $R$. Consequently $R'$ must contain a fact of the form $R(b',c'')$, hence $R'$ contains the path $R(a,b')$, $R(b',c'')$. Next, assume that $R$ does not satisfy $\psi_2$. We will show how to construct a repair $R'$ of $R$ that contains no path of length $2$. If $a$ is a node for which there is a fact $R(a,b)$ of $R$ such that there is no fact of the form $R(b,c)$ in $R$, then we pick one such $b$ and put $R(a,b)$ in $R'$.
 Since $\psi_2$ is false on $R$, if  $a$, $b$ and $c$
 are three nodes
 such that
$R(a,b)$ and $R(b,c)$ are facts of $R$, then there is a $b'$ such that $R(a,b')$ is a fact of $R$,  but there is no node $c'$ such that $R(b',c')$ is a fact of $R$.  In this case, we add $R(a,b')$ to the repair $R'$. We continue doing the same  for all
nodes $a$ that are the beginning of a path of length $2$ in $R$. This construction produces a repair $R'$ of $R$ that contains no path of length $2$.

Next, let $\psi_3$ be the first-order sentence
$$\exists x,y,z,w [ R(x,y) \wedge R(y,z) \wedge R(z,w) \wedge \forall y (  R(x,y) \rightarrow $$
$$   \exists z,w [ R(y,z) \wedge R(z,w) \wedge \forall w'(  R(y,w')\rightarrow \exists z'R(w',z')  )           ]$$
Observe that the subformula of $\psi_3$ shown in the second row is essentially the formula $\psi_2$, except that
an existential quantifier is missing in the front.

It is not hard to verify that $\psi_3$ is a first-order rewriting of $\cert{3\Path}$.
 The intuition is analogous to that for $\psi_2$, namely, $\psi_3$ asserts that there is a path of length $3$ in the database and, moreover, when we replace in some repair the first edge with another edge whose endpoint is a node $u$, then there is a path of length $2$ starting from the node $u$. This ensures that every repair contains a path of length $3$.

A first-order rewriting $\psi_n$ of $\cert{n\Path}$, for $n> 3$, can be obtained via an inductive definition that is similar to the way $\psi_3$ was obtained from $\psi_2$. Specifically, the first part asserts the existence of a path of length $n$ and the rest of $\psi_n$ is essentially the first-order rewriting $\psi_{n-1}$ of $\cert{(n-1)\Path}$, except that an existential quantifier is missing in the front.
\end{proof}

Next, we focus on queries that are a disjoint collection of cycles each of which has length at least $2$.

\begin{theorem} \label{Cycle-thm}
Let $\bf R$ be a schema consisting of a single binary relation symbol with the first attribute as the key.
 Assume that  $q$  is a disjoint collections of cycles such that each cycle in the collection has length at least $2$, and
the length of each cycle in the collection does not divide the length of any other cycle in the collection.
Then $\cert{q}$ is not $\FO$-rewritable.
\end{theorem}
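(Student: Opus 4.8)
The plan is to refute $\FO$-rewritability by an Ehrenfeucht--Fra\"{\i}ss\'{e} / Hanf-locality argument. If $\cert{q}$ were $\FO$-rewritable by a sentence of quantifier rank $k$, then no two instances that are indistinguishable by rank-$k$ first-order sentences could lie on opposite sides of $\cert{q}$. So for every $k$ I would construct a pair of instances $R_k^{+}$ and $R_k^{-}$ with $R_k^{+}\equiv_k R_k^{-}$ such that every repair of $R_k^{+}$ satisfies $q$ while some repair of $R_k^{-}$ does not, thereby contradicting rewritability for all $k$ simultaneously.

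\textbf{Reducing the collection to a single cycle.} Let $\ell_1,\dots,\ell_t$ be the cycle lengths occurring in $q$ (each $\ge 2$, none dividing another). I would adjoin to both $R_k^{+}$ and $R_k^{-}$ the same fixed consistent components, namely one disjoint directed simple cycle $C_{\ell_i}$ for each $i\ge 2$. Each such component is consistent and hence survives unchanged in every repair, so it contributes a closed walk of length $\ell_i$ and satisfies the $\ell_i\Cycle$ conjunct in every repair; by the non-division hypothesis its only closed-walk lengths are multiples of $\ell_i$, so it satisfies no other conjunct and, in particular, never supplies a cycle relevant to $\ell_1\Cycle$. Thus, after this addition, the truth of $\cert{q}$ on the combined instance is governed solely by whether every repair of the remaining ``core'' gadget contains a cycle of length dividing $\ell_1$; and since the adjoined components are identical in $R_k^{+}$ and $R_k^{-}$, the relation $\equiv_k$ is preserved by the standard composition (Feferman--Vaught) argument for disjoint unions. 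It therefore suffices to show that $\cert{\ell\Cycle}$ is not $\FO$-rewritable for each fixed $\ell\ge 2$.

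\textbf{The core gadget.} For the single-cycle case the idea is to build a homogeneous ``necklace'' $H(s)$: a blow-up of a directed $s$-cycle in which each of the $s$ copies of a small fixed cell carries key violations, so that a repair amounts to a global choice, one value per cell. The cells are to be engineered as a cyclic system of congruences modulo $\ell$ whose solutions correspond exactly to the repairs that contain \emph{no} cycle of length dividing $\ell$; solvability of this cyclic system, and hence the existence of such a cycle-avoiding repair, is meant to depend only on the residue of $s$ modulo $\ell$. I would then pick $s^{-}$ in a solvable residue class and $s^{+}$ in an unsolvable one, both far above the locality threshold for rank $k$. Since every node of $H(s)$ has the same radius-$r$ neighborhood type and $H(s)$ is a fixed blow-up of $C_s$, long necklaces of any two admissible lengths are $\equiv_k$ (long directed cycles are first-order indistinguishable regardless of their length modulo $\ell$). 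Hence $H(s^{+})\equiv_k H(s^{-})$, while every repair of $H(s^{+})$ is forced to close up a cycle of length dividing $\ell$, so $H(s^{+})\in\cert{\ell\Cycle}$, and some repair of $H(s^{-})$ avoids all such cycles, so $H(s^{-})\notin\cert{\ell\Cycle}$, as required.

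\textbf{Main obstacle.} The technical heart is the design and analysis of the cell, where three requirements must hold at once: (a) repairs of $H(s)$ are in bijection with assignments to the cyclic congruence system, so that ``no cycle-avoiding repair exists'' is equivalent to ``the system is unsolvable'', which in turn depends on $s\bmod\ell$; (b) the ``divides'' bookkeeping dictated by Proposition~\ref{instance-structure-prop}, i.e.\ the only short cycles a repair can create have length dividing $\ell$ exactly in the intended cases, with no spurious self-loops or accidental short cycles introduced by the gadget internals; and (c) local homogeneity of the cell, so that the Hanf/EF step genuinely applies. Reconciling (a) and (b) while maintaining (c) is the crux; the reduction of the collection to a single cycle and the first-order indistinguishability of long necklaces are then routine. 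An alternative packaging is a first-order reduction from the non-$\FO$ cardinality property $\mathrm{MOD}_\ell$, but it would require the very same gadget.
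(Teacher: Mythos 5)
Your overall strategy (Ehrenfeucht--Fra\"{\i}ss\'e games on pairs of instances that agree up to quantifier rank $k$ but differ on $\cert{q}$) and your reduction of the general case to a single cycle (adjoining a fixed disjoint cycle $C_{\ell_i}$ for each $i\geq 2$ to both instances and composing over disjoint unions) coincide with what the paper does. The gap is in the single-cycle case: your ``core gadget'' is never constructed, only specified by the properties (a)--(c) it would need, and you yourself flag its design as the crux. This is not a routine omission, because the shape you propose --- a homogeneous blow-up $H(s)$ of a directed $s$-cycle whose repair behaviour depends on $s\bmod \ell$ --- cannot deliver what you need. By Theorem~\ref{thm1} (via Lemmas~\ref{lemma-1}--\ref{lemma-2}), every repair of an instance satisfies $\ell\Cycle$ if and only if some (nontrivial) sink strongly connected component has \emph{all} of its simple cycles of length dividing $\ell$; the property is structural, not modular. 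If $H(s)$ is strongly connected with edges going within a cell or from cell $i$ to cell $i+1$ --- which is forced if the cells are to constrain one another cyclically and every node is to have the same neighbourhood type --- then any inter-cell edge lies on a simple cycle, and such a cycle must traverse all $s$ cells, so $H(s)$ contains a simple cycle of length at least $s>\ell$. Hence, for \emph{every} large $s$, $H(s)$ admits a repair with no cycle of length dividing $\ell$, and no choice of residue class puts $H(s^{+})$ into $\cert{\ell\Cycle}$. A $\mathrm{MOD}_\ell$-style necklace is therefore the wrong vehicle.

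The paper's gadget exploits exactly the structural characterization you would be fighting against: it uses a long \emph{chain} (not a closed necklace) of overlapping $\ell$-cycles (double edges for $\ell=2$, consecutive triangles for $\ell=3$) whose only simple cycles have length $\ell$, so every repair of the bare chain contains an $\ell$-cycle; attaching an extra \emph{incoming} simple edge at an interior node preserves this, while attaching an extra \emph{outgoing} simple edge provides an escape that lets a repair break all $\ell$-cycles. The instances $D_1$ and $D_2$ then realize the same multiset of local neighbourhood types (the same numbers of ``in-special'' and ``out-special'' nodes, placed far apart) but distribute them differently over two components, so that $D_1$ has a component with no outgoing special edge while both components of $D_2$ have one. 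If you replace your necklace by this in/out asymmetry, the EF step and the disjoint-union reduction you describe go through essentially as you wrote them.
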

\begin{proof}
We first prove the result for the case in which the query is a single cycle of length at least $2$; in other words, we will show that if $n\geq 2$, then $\cert{n\Cycle}$ is not $\FO$-rewritable. The proof uses the technique of Ehrenfeucht-Fra\"iss\'ee games (see \cite{Libkin04} for an exposition). For concreteness, we provide the details for $\cert{2\Cycle}$ and for $\cert{3\Cycle}$; the generalization to cycles of bigger length will become clear for the constructions in these two cases.

For the $2\Cycle$ query, let $D_1$ and $D_2$ be the database instances depicted in Figure \ref{fig:cycle2}.

\begin{figure} \begin{center}
\includegraphics[width=0.59\textwidth]{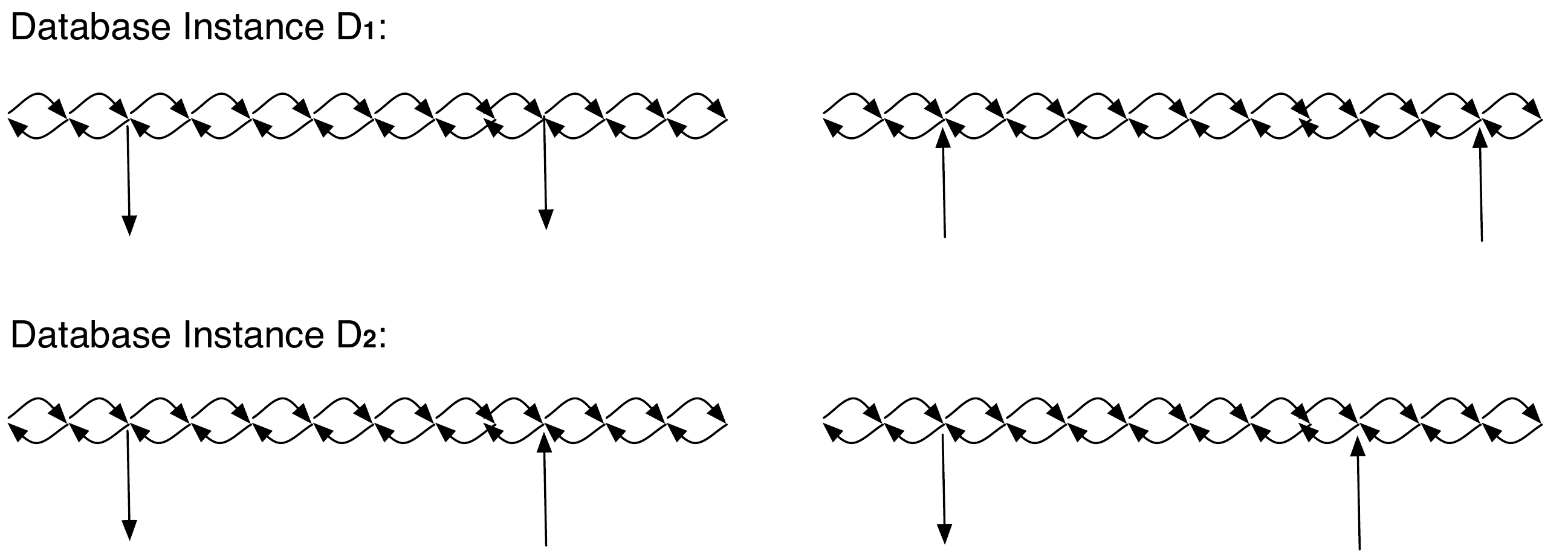}
\caption{Databases $D_1$ and $D_2$}\label{fig:cycle2}
\end{center}
\end{figure}

\squishlist
\item {\em Database Instance $D_1$}: There are two disjoint simple paths of ``double" edges, say $R(u,v)$ and $R(v,u)$, each of which forms a $2$-cycle.
For the first path, there are two simple edges going out from two nodes that ``far apart" and also ``far" from the point the endpoints of the path.
For the second path,
 there are two simple edges entering at two nodes on the path that are  ``far apart".

\item {\em Database Instance $D_2$}:
As in $D_1$, there are two disjoint simple paths of ``double" edges, say $R(u,v)$ and $R(v,u)$, each of which forms a $2$-cycle.
For the first path,  there is one simple edge entering and one simple
edge going out at nodes that are ``far apart", and also ``far" from the endpoints of the path.
The second path is a copy of the first path of $D_2$.
\squishend

We claim that every repair of $D_1$ satisfies the $2\Cycle$ query, while there is a repair of $D_2$ on which the $2\Cycle$ query is false. To see this, consider first a simple path of ``double" edges (with no ingoing or outgoing simple edges at some node). In such a path, all nodes have outdegree $2$, except for the two endpoints of the path. Thus, the edges emanating from the two endpoints must be included in every repair of the path. From this, it follows that every repair must contain a cycle of length $2$, since, if one tries to build a repair that avoids $2$-cycles, then one ends up including a cycle of length $2$ at one of the two endpoints. The situation remains the same if we augment the path with simple ingoing edges. From this, it follows that every repair of $D_2$ contains a $2$-cycle in the right component of $D_2$. If, however, we augment the path with at least one simple outgoing edge, then we can construct a repair of the path that has no $2$-cycles. Since both components of $D_2$ have an outgoing simple edge, it follows that $D_2$ has a repair that has no $2$-cycles.

Let us call a node in $D_1$ or in $D_2$ \emph{special} if, in addition to the edges of the $2$-cycle, it has an ingoing or outgoing simple edge. Fix a positive integer $m$ and consider the $m$-move Ehrenfeucht-Fra\"iss\'e game on two instances that have the same shape as $D_1$ and $D_2$. If the distance between the two special nodes in each component is large enough, then it is easy to see that the Duplicator wins the $m$-move Ehrenfeucht-Fra\"iss\'e game on these two instances, because when the Spoiler plays close to a special node in one of the instances, then the Duplicator can play on a similar node (i.e., with an outgoing or ingoing extra simple edge) in the other instance. Consequently, $\cert{2\Cycle}$ is not $\FO$-rewritable.

For the $3\Cycle$ query,
consider a sequence of consecutive $3$-cycles $R(a,b_1)$, $R(b_1,c_1)$, $R(c_1,a)$;  $R(c_1,b_2)$, $R(b_2,c_2)$,
$R(c_2,c_1)$; $R(c_2,b_3)$,   $R(b_3,c_3)$, $R(c_3,c_2)$, and so on, until the last $3$-cycle, say, $R(c_m,b_{m+1})$, $R(b_{m+1},a')$, $R(a',c_m)$. In this instance, only the nodes $c_1,c_2,\ldots$ have degree bigger than one, so all other edges must be in every repair. Consider a repair of this instance. If it contains $R(c_1,a)$ or some edge $R(c_i,c_{i-1})$, then the repair contains a $3$-cycle. The best chance to eliminate  all $3$-cycles in a repair is to remove $R(c_1,a), R(c_2,c_1), \ldots, R(c_m,c_{m-1})$. But then we must keep the edges $R(c_m,b_{m+1})$, $R(b_{m+1},a')$, $R(a',c_m)$, which form a $3$-cycle.

Now, if we have incoming edges to some of the $c_i$'s, it will still be the case that every repair contains a $3$-cycle. On the other hand, even a single outgoing edge to some $c_i$ allows to get a repair that has no $3$-cycles by keeping this outgoing edge from $c_i$ and removing the edges $R(c_i,c_{i-1}), R(c_i,b_{i+1})$, and ultimately removing the edge $R(c_m,b_{m+1})$, thus eliminating all $3$-cycles in the process.

The rest then is as for the instances with the $2\Cycle$ query.  We form two instances $D_1$ and $D_2$ with two chains of triangles in each, and ingoing and outgoing edges to some nodes $c_i$ and $c_j$ as in Figure \ref{fig:cycle2}, and use Ehrenfeucht-Fra\"iss\'e games to conclude that $\cert{3\Cycle}$ is not $\FO$-rewritable.


Finally,  we need to consider disjoint collections of cycles.
  For concreteness, let $q$ be a query made up of  three cycles  $C_1$,
$C_2$, $C_3$ such that the length of each of these cycles  does not divide the length of anyone of the other cycles (in particular, the length of each cycle is at least $2$). Consider the instances $D_1$ and $D_2$ for the query associated with the cycle $C_1$, and form the instances $D'_1\oplus C_2\oplus C_3$ and $D'_2= D_2\oplus C_2\oplus C_3$ obtained by forming the disjoint union of $D_1$, $C_2$, $C_3$, and the disjoint union of $D_2$, $C_2$, $C_3$.
It it not hard to show that every repair of $D'_1$ satisfies the query $q$, while there is a repair of $D'_2$ that it does not. Moreover, for every positive integer $m$, we can construct instances that have the shape of $D'_1$ and $D'_2$, and are such that the Duplicator wins the $m$-move Ehrenfeucht-Fra\"iss\'e game played on these instances.  Consequently, $\cert{q}$ is not $\FO$-rewritable.
\end{proof}

Theorem \ref{FO-thm} now follows from Theorems \ref{query-structure-thm},  \ref{Path-thm}, \ref{Cycle-thm}, and the earlier remarks about the $1\Cycle$ query.

\section{Polynomial-Time Computability}
\label{algo-sec}

Let $\bf R$ be a schema consisting of a single binary relation symbol with the first attribute as the key. In this section, we show that if $q$ is a boolean conjunctive query over $\bf R$, then $\cert{q}$ is in $\PTIME$. Clearly, if
$\cert{q}$ is $\FO$-rewritable, then $\cert{q}$ is in $\PTIME$. Thus,
in view of Theorems \ref{query-structure-thm} and \ref{FO-thm}, it suffices to show that $\cert{q}$ is in $\PTIME$ whenever $q$ is a disjoint collection of cycles such that the length of each cycle in the collection is at least $2$ and it does not divide the length of any other cycle in the collection. This will be accomplished in a series of steps that build to the main result.

Before we proceed, we need to recall  the following facts from graph theory, which will be useful in some of the proofs. If $G$ is a graph, then the strongly connected components  of  $G$ form a partition of the set of nodes of $G$.
If each strongly connected component is contracted to a single node, the resulting graph is a directed acyclic graph, called the {\em condensation} graph of $G$.  The strongly connected components that  are contracted into sink nodes in the condensation graph of $G$ are called {\em sink} strongly connected components.

We start with proving that $\cert{n\Cycle}$, $n \geq 2$, is in $\PTIME$.  For this,  we need some lemmas in which we always assume that we have a schema consisting of a single binary relation with the first attribute as a key.

\begin{lemma}
\label{lemma-0}  If $D$ is a repair of an instance $R$, then every cycle in $D$ is simple.
\end{lemma}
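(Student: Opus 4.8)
The plan is to reduce the statement to the elementary fact that, under the key constraint, the graph of a consistent instance has out-degree at most $1$ at every node, and that any directed graph with this property can contain only simple cycles.

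First I would record the basic observation: since $D$ is a repair of $R$, it is in particular a consistent instance, so it satisfies the key constraint on the first attribute. Hence $D$ cannot contain two facts $R(a,b)$ and $R(a,b')$ with $b\neq b'$; equivalently, in the graph of $D$ every node has out-degree at most $1$. Thus the successor of a node (the unique node it points to, if any) is well defined.

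Next, consider an arbitrary cycle $C\colon v_0\to v_1\to\cdots\to v_k=v_0$ in $D$, where each $(v_i,v_{i+1})$ is an edge. Because every node has out-degree at most $1$, the edge leaving $v_i$ is unique, so $v_{i+1}$ is completely determined by $v_i$. I would then suppose, toward a contradiction, that $C$ is not simple, i.e.\ that $v_i=v_j$ for some $0\le i<j\le k-1$. Applying the single-successor property repeatedly gives $v_{i+t}=v_{j+t}$ for all $t$, so the vertex sequence is periodic with period $p=j-i<k$; tracing this back through $v_k=v_0$ shows that the only way a repetition can occur is that the walk is a repeated traversal of a strictly shorter closed walk, whose underlying edge set is a simple cycle on the distinct vertices that $C$ actually visits. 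Hence $C$ is simple. Alternatively, and perhaps more cleanly, I would simply invoke Proposition \ref{instance-structure-prop}: since $D$ is consistent, its graph is the union of a forest of trees oriented toward their roots together with simple cycles. A forest is acyclic and its tree edges point strictly toward roots, so no cycle can use a non-root tree node; consequently every cycle of $D$ lies entirely on one of the designated simple cycles and is therefore simple.

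I do not expect a genuine obstacle here: the content is the one-line reduction ``repair $\Rightarrow$ consistent $\Rightarrow$ out-degree $\le 1$''. The only point requiring care is the convention for the word \emph{cycle}: if a cycle is allowed to be any closed walk, one must note that the single-successor property prevents any walk from branching, so a non-simple closed walk is merely a simple cycle traversed several times, and in particular its underlying edge set is a simple cycle.
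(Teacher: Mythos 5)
Your proposal is correct and is essentially the paper's own argument: the paper likewise notes that a non-simple cycle would contain a node with outgoing edges to two distinct nodes, i.e., two facts $R(u,v_1)$ and $R(u,v_2)$ with $v_1\neq v_2$, contradicting the key constraint. Your additional remark about the closed-walk convention (a non-branching closed walk is just a simple cycle traversed repeatedly) is a sensible clarification of the same one-line idea, not a different route.
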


\begin{proof} Assume that $D$ is a repair of $R$ such that $D$ contains a cycle $C$ that is not simple. This means that $C$ contains a node $u$ with outgoing edges to two distinct nodes $v_1$ and $v_2$. Consequently, $D$ contains the facts $R(u,v_1)$ and $R(u,v_2)$, hence $D$ violates the key constraint, which is a contradiction.
\end{proof}

\begin{lemma}
\label{lemma-1}
\label{algo-lemma}
If $D$ is a repair of an instance $R$ and if $S$ is a sink strongly connected component of $R$, then
the intersection $D\cap S$ contains a simple cycle.
\end{lemma}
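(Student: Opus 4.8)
The plan is to show that, within the subgraph $D \cap S$, every node of $S$ has out-degree exactly one; since a finite directed graph in which every node has out-degree at least one necessarily contains a cycle, this produces a cycle inside $D \cap S$, and by Lemma~\ref{lemma-0} any such cycle in $D$ is simple. (Here I read the hypothesis, as is intended in this context, so that the sink strongly connected component $S$ is nontrivial, i.e.\ contains at least one edge; a single isolated node with no self-loop is a degenerate sink component with $D\cap S=\emptyset$, for which the statement is vacuously inapplicable and which plays no role in the sequel.)

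Two observations drive the argument. First, because $S$ is a \emph{sink} strongly connected component, every edge of $R$ leaving a node of $S$ must land again inside $S$: if some fact $R(u,v)$ with $u\in S$ and $v\notin S$ existed, then the condensation node of $S$ would have an outgoing edge, contradicting that it is a sink. Second, because $S$ is strongly connected and nontrivial, every node $u\in S$ has at least one outgoing edge in $R$, since $u$ must be able to reach the rest of $S$ (or, in the one-node case, $S$ carries a self-loop).

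Next I would invoke the maximality of the repair $D$. Fix any node $u\in S$. Since $u$ has an outgoing edge in $R$, and $D\subseteq R$ is a \emph{maximal} consistent subinstance, $D$ must contain exactly one outgoing edge from $u$: it contains at most one by the key constraint, and it contains at least one because otherwise we could add to $D$ some outgoing edge of $u$ that is present in $R$ without violating the key constraint (adding an edge out of $u$ affects only the outgoing edges of $u$, and $u$ currently has none in $D$), contradicting maximality. By the first observation, the unique successor of $u$ in $D$ again lies in $S$, so this edge belongs to $D\cap S$. Hence every node of $S$ has out-degree exactly one in $D\cap S$.

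The conclusion then follows by a standard finiteness argument: following the unique successor from an arbitrary node of $S$ yields a walk in the finite set $S$, so some node must eventually repeat, and the segment of the walk between two occurrences of a repeated node is a cycle contained in $D\cap S$; by Lemma~\ref{lemma-0} this cycle is simple. The only delicate point, which is where I expect the real content to lie, is the use of maximality to upgrade ``out-degree at least one in $R$'' to ``out-degree exactly one in $D$'' while simultaneously keeping the retained edge inside $S$ via the sink property; once both are in hand, the existence of the cycle is routine.
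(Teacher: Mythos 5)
Your proof is correct and takes essentially the same route as the paper's: starting from any node of $S$, use the sink property and the maximality of the repair to find an outgoing edge of $D$ that stays inside $S$, iterate, and extract a cycle once a node repeats. You merely make explicit the maximality and nontriviality-of-$S$ points that the paper's proof leaves implicit.
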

\begin{proof}
Let $u$ be a node in  $D\cap S$. Since $S$ is a sink strongly connected component of $R$, every edge outgoing from $u$ must lead to a node in $S$. Therefore, there is a node $v$ such that $R(u,v)$ is a fact of $D\cap S$.
By the same reasoning, there a node $w$ such that $R(v,w)$ is in $D\cap S$, and so on; thus, we obtain a path in which every edge is in $D\cap S$. Since $D\cap S$ is finite, at some point
we will encounter for the first time a node that has been earlier in the path, hence  $D\cap S$ contains a simple cycle.
\end{proof}

\begin{lemma}
\label{lemma-3} If $R$ is an instance,  $S$ is a strongly connected component of $R$, and $C$ is a simple cycle in $S$,  then
  there is a repair $D$ of $R$  such that
$D\cap S$
   contains the simple cycle $C$ and no other simple cycle.
\end{lemma}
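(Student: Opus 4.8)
The plan is to exploit the fact that, in any repair $D$, the key constraint forces every node to have out-degree at most one; hence building a repair amounts to selecting, for each node that has an outgoing edge in $R$, exactly one such edge to retain. I would therefore construct $D$ by prescribing the retained outgoing edges for the nodes of $S$ so that the cycle $C$ survives while every other node of $S$ is routed so as to ``drain'' into $C$, and then complete the selection arbitrarily for the remaining nodes of $R$.

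Concretely, since $S$ is strongly connected and $C \subseteq S$, for every node $v \in S$ there is a path from $v$ to a node of $C$ using only edges internal to $S$. Let $d(v)$ denote the length of the shortest such path, so that $d(v)=0$ exactly when $v$ lies on $C$, and $d$ is finite on all of $S$. I would then define $D$ as follows: if $v$ is on $C$, its retained outgoing edge is the edge of $C$ leaving $v$; if $v \in S$ with $d(v)=k>0$, then by the definition of $d$ there is an edge $R(v,w)$ inside $S$ with $d(w)=k-1$, and I retain one such edge; for every node outside $S$ (and for any residual choices) I retain an arbitrary outgoing edge. The resulting $D$ is consistent and maximal, hence a repair of $R$, and because all edges retained from nodes of $S$ land inside $S$, the set $D\cap S$ consists exactly of these retained edges.

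It remains to verify that the only simple cycle in $D\cap S$ is $C$. Clearly $C\subseteq D\cap S$ by construction. For the converse I would use $d$ as a potential function. Along the unique retained outgoing edge of any node of $D\cap S$, the value of $d$ stays at $0$ when the node lies on $C$ (the retained edge is a $C$-edge between two $C$-nodes) and strictly decreases when the node lies in $S\setminus C$. Around any cycle the total change of $d$ must be $0$; since every step is non-increasing and every step leaving a node of $S\setminus C$ is strictly decreasing, no such node can appear on a cycle. Thus every simple cycle of $D\cap S$ uses only nodes of $C$, and since each node of $C$ retains precisely its $C$-edge, that cycle must be $C$ itself. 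The main point to get right is the potential argument, together with the observations that $d$ is well-defined and that the drain edges stay inside $S$---both of which follow from the strong connectivity of $S$; everything else is the routine verification that the prescribed edge selection indeed yields a repair.
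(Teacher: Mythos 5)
Your proof is correct, and it takes a cleaner, more direct route than the paper's. The paper first builds an arbitrary repair containing $C$ (by greedy completion) and then performs iterative ``surgery'': whenever another simple cycle $C'$ survives in $D\cap S$, it reroutes some node $u$ of $C'$ onto the first edge of a path from $u$ to $C$, arguing that any newly created cycle is ``closer'' to $C$ so the process terminates. You instead construct the repair in one shot: retain the $C$-edge at each node of $C$, retain a shortest-path ``drain'' edge toward $C$ at every other node of $S$, and complete arbitrarily elsewhere; correctness then follows from a potential-function argument with $d(\cdot)$, the distance to $C$ within $S$. Both arguments rest on the same two facts --- that a repair is exactly a choice of one outgoing edge per node, and that strong connectivity of $S$ guarantees paths (staying inside $S$) from every node of $S$ to $C$ --- but your version replaces the paper's informal termination argument for the cycle-breaking loop with a single monotonicity check, which is easier to verify rigorously. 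The only points worth making fully explicit are the ones you already flag: that shortest paths between nodes of $S$ remain inside $S$ (so $d$ is well defined and the drain edges are internal to $S$), and that a cycle in $D\cap S$ whose nodes all lie on $C$ must be $C$ itself because each $C$-node retains exactly its $C$-successor.
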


\begin{proof}
First, we  build a repair $D'$ that contains the simple cycle $C$. For this, we
 include the simple cycle $C$ in the repair and then we keep adding edges until no more edges can
be added while, at the same time,  satisfying the key constraint of $R$.

 Second, we build  a repair $D$ such that  $D\cap S$ contains the simple cycle $C$ and no other simple cycle.
 For this, we start with the repair $D_1$ in the previous step for which we have that $D_1 \cap S$ contains the simple cycle $C$. Suppose that $D_1 \cap S$ contains another simple cycle $C'$. Note that $C$ and $C'$ have no nodes in common, since, by Lemma \ref{lemma-0}, every cycle in $D_1$ is simple. We construct another repair
$D_2$ by ``breaking" $C'$ as follows.  The strongly connected component  $S$ contains  a node $u$  on $C'$   such that there is a path $p$  from $u$ to a node
of $C$. We delete the outgoing edge from $u$ that belongs to  $C'$ and add the edge $e=(u,v)$ of $u$ that is on the path $p$.
If the  newly added edge $e$ has as endpoint a node on $C$, then it does not create a new cycle.  Otherwise, it may create a new cycle $C^{''}$; however,  there is now a path from $v$ to a node of
$C$ that is shorter than $p$ (in fact, this path is obtained from $p$ by deleting the edge  $e$); this way,
 we continue breaking
the next cycle (if there is one) until no cycles other than $C$ are left.
\end{proof}

\begin{lemma}
\label{lemma-2} Every instance $R$ has a repair $D$ in which the only simple cycles are in sink strongly connected components.
\end{lemma}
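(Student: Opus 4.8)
The plan is to exploit the fact that, in any consistent instance, every cycle is confined to a single strongly connected component of $R$, so the task reduces to building a repair whose restriction to each \emph{non-sink} strongly connected component is acyclic. The guiding principle is that a repair is exactly a choice of one outgoing edge for every node that has out-degree at least $1$ in $R$, so I have complete freedom in how to route each node's single successor, subject only to choosing a genuine edge of $R$.

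First I would record the structural observation: if $D$ is a consistent sub-instance and $C$ is a cycle in $D$, then all nodes of $C$ lie in one strongly connected component of $R$, since the endpoints of each edge of $C$ are mutually reachable within $C \subseteq D \subseteq R$. Hence it suffices to exhibit a repair $D$ such that, for every non-sink strongly connected component $S$, the set $D \cap S$ contains no cycle; cycles will then be possible only inside sink components.

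Next, for each non-sink strongly connected component $S$ I would pick an \emph{escape node}: since $S$ is not a sink of the condensation graph, some node $w \in S$ has an outgoing edge $(w,x)$ with $x \notin S$. Because $S$ is strongly connected, every node $u \in S$ reaches $w$ along a path inside $S$; let $d(u)$ be the length of a shortest such path. For each $u \in S$ with $u \neq w$, I would select as the outgoing edge of $u$ in $D$ the first edge $(u,u')$ of a shortest path from $u$ to $w$ inside $S$, so that $d(u') = d(u) - 1$; for $w$ I would select the escaping edge $(w,x)$. Following the selected edges from any node of $S$ then strictly decreases $d$ until it reaches $w$ and leaves $S$, so $D \cap S$ is acyclic. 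Finally, I would complete $D$ by selecting, for every remaining node of out-degree at least $1$ that has not yet been assigned a successor (the nodes of sink components and of trivial components), an arbitrary outgoing edge. Since $D$ assigns exactly one outgoing edge to each node that has any outgoing edge in $R$, it is a maximal consistent sub-instance, i.e., a repair. By the first step, every cycle of $D$ lies in a single strongly connected component, and by the construction that component cannot be non-sink; hence all cycles of $D$ lie in sink components.

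\emph{Main obstacle:} The only delicate step is the acyclic selection inside each non-sink component. One must guarantee simultaneously that each chosen successor is a genuine edge of $R$, that every node receives exactly one successor (so that the key constraint and maximality hold), and that no internal cycle survives. Orienting the component toward the escape node along shortest paths — a breadth-first in-tree rooted at $w$, together with the single escaping edge at $w$ — makes the strict decrease of the potential $d$, and therefore acyclicity, transparent. Everything else is routine bookkeeping confirming consistency and maximality of the resulting $D$.
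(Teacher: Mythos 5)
Your proof is correct, but it takes a genuinely different route from the paper's. The paper first invokes Lemma~\ref{lemma-3} to build a repair $D'$ whose intersection with each strongly connected component contains exactly one designated simple cycle, chooses that cycle so that it passes through a node $u$ having an edge into a lower level of the condensation graph, and then ``breaks'' the cycle in each non-sink component by swapping $u$'s cycle edge for its inter-level edge (which cannot create a cycle, since inter-level edges lie on no cycle). You instead bypass Lemma~\ref{lemma-3} entirely and construct the repair directly: you observe that a repair is precisely a choice of one outgoing edge per node of positive out-degree and that every cycle of a consistent sub-instance is confined to a single strongly connected component of $R$, and then, in each non-sink component, you orient every node along a breadth-first in-tree toward an escape node $w$, giving $w$ its escaping edge; the strictly decreasing distance potential makes acyclicity immediate. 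Your argument is more elementary and self-contained (the shortest-path potential plays the role that the paper's iterative cycle-breaking in Lemma~\ref{lemma-3} plays, but in a one-shot, transparently terminating form), whereas the paper's version buys economy by reusing Lemma~\ref{lemma-3}, which it needs anyway for the ``only if'' direction of Theorem~\ref{thm1}. One small point worth making explicit in your write-up: a single node carrying a self-loop together with an escape edge forms a non-sink strongly connected component, and your construction correctly discards the self-loop there, but it deserves a sentence so that the reader does not worry about ``trivial'' components receiving arbitrary edges.
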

\begin{proof}
%

Towards building the desired repair $D$, we start by
building a repair $D'$ that has a simple cycle in  each non-sink strongly connected component.
Actually, for every strongly connected component $S$, we can choose any cycle we want  in $D'\cap S$ (as per Lemma~\ref{lemma-3}), so we choose a cycle that has a special node $u$ such that there is an edge from $u$ to
 a strongly connected component at the next level of the condensation graph of $R$ (recall that the condensation graph is a directed acyclic graph).
We now build $D$ from $D'$ by ``breaking" cycles as follows.  For every strongly connected component $S$, we take the special node $u$ in $D'\cap S$,
and we remove the edge outgoing from $u$ in the cycle and add the edge that goes from $u$ to the next level. This new edge does not create a cycle because inter-level edges do not belong to any cycle.
\end{proof}

We are now ready to state and prove the main technical result of this section.

\begin{theorem}
\label{thm1} Consider the $n\Cycle$ query, where $n\geq 2$. Then the following statements are true.
\squishlist
\item Every repair of an instance $R$ satisfies the $n\Cycle$ query if and only if there is a sink strongly component $S$ of $R$ such that every simple cycle of $S$ is a homomorphic image of an $n$-cycle.
    \item $\cert{n\Cycle}$ is in $\PTIME$.
    \squishend
\end{theorem}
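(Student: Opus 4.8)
The plan rests on a single structural observation about repairs. Because the first attribute of $R$ is a key, every repair $D$ satisfies the key constraint and is therefore a \emph{functional graph}: each node has out-degree at most one. I would first record the consequence that in such a graph a homomorphic image of the $n$-cycle can only arise by iterating the (partial) successor function. Indeed, if facts $R(a_1,a_2),\dots,R(a_n,a_1)$ witness the $n\Cycle$ query in $D$, then each $a_{i+1}$ is the unique out-neighbour of $a_i$, so $a_1$ returns to itself after $n$ steps and hence lies on a simple cycle whose length divides $n$; conversely, traversing a simple cycle of length $d\mid n$ exactly $n/d$ times yields a closed walk of length $n$. Thus a repair $D$ satisfies $n\Cycle$ \emph{if and only if} $D$ contains a simple cycle whose length divides $n$. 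I would also note that a simple cycle of length $d$ is a homomorphic image of the $n$-cycle precisely when $d\mid n$ (a homomorphism from the $n$-cycle to the $d$-cycle exists iff $d\mid n$), so the condition in the theorem reads exactly as ``every simple cycle of $S$ has length dividing $n$''.

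With this in hand the first statement follows from the lemmas, where, as implicit in Lemma~\ref{lemma-1}, a sink component is understood to contain at least one cycle (a cycle-free sink component forces nothing and is ignored). For the direction from right to left, suppose $S$ is a sink strongly connected component all of whose simple cycles have length dividing $n$, and let $D$ be an arbitrary repair. By Lemma~\ref{lemma-1}, $D\cap S$ contains a simple cycle $C$; since $C$ is a simple cycle of $S$, its length divides $n$, so $D$ contains a cycle of length dividing $n$ and hence satisfies $n\Cycle$. As $D$ was arbitrary, every repair satisfies the query. For the converse I would argue by contraposition: assume that every sink strongly connected component $S$ contains some simple cycle $C_S$ whose length does not divide $n$. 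Combining Lemma~\ref{lemma-2} (to break all cycles outside the sink components) with Lemma~\ref{lemma-3} applied independently inside each sink component --- legitimate because sink components are pairwise disjoint and have no outgoing edges, so the edge choices made inside them do not interact --- I would build a single repair $D$ whose simple cycles are exactly the chosen $C_S$. None of these lengths divides $n$, so by the observation above $D$ fails $n\Cycle$, and therefore not every repair satisfies the query.

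For the second statement, the characterization just proved turns $\cert{n\Cycle}$ into the following test: accept iff some sink strongly connected component $S$ has \emph{no} simple cycle whose length fails to divide $n$. The strongly connected components, the condensation, and the sink components are all computable in polynomial time. For a fixed $S$, I would detect a ``bad'' simple cycle (one of length not dividing $n$) in two parts: (i) for each of the finitely many $\ell\in\{2,\dots,n\}$ with $\ell\nmid n$, test whether $S$ has a simple cycle of length exactly $\ell$, which is polynomial for each fixed $\ell$ (by brute force over $\ell$-tuples of nodes, or by color coding); and (ii) test whether $S$ has any simple cycle of length greater than $n$. Declaring $S$ ``good'' exactly when both tests fail, and accepting iff some sink component is good, yields the algorithm; since $n$ is a constant (it is part of the query, not of the input), each step runs in polynomial time in the size of the instance.

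The main obstacle is part~(ii). A strongly connected graph may contain exponentially many simple cycles, and, as a single long directed cycle shows, the shortest bad cycle can be arbitrarily longer than any constant, so neither brute-force enumeration nor the bounded-length detection of part~(i) suffices. The heart of the argument is therefore a polynomial-time procedure that decides, for the fixed constant $n$, whether a strongly connected graph has a simple cycle of length exceeding $n$ --- the long-directed-cycle problem with a constant length threshold, which I expect to be the technically delicate step (handled by adapting color coding to detect a long cycle, or by invoking the fixed-parameter tractability of finding a directed cycle of length at least a given bound). By contrast, the reduction to this test through the functional-graph observation and Lemmas~\ref{lemma-1}--\ref{lemma-3} is routine.
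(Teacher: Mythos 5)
Your proof of the characterization (the first bullet) follows the paper's own route: the same three lemmas are used in the same roles (Lemma~\ref{lemma-1} for the ``if'' direction, Lemmas~\ref{lemma-3} and~\ref{lemma-2} combined for the contrapositive of the ``only if'' direction), and your reformulation of ``homomorphic image of an $n$-cycle inside a repair'' as ``simple cycle whose length divides $n$'' is a correct sharpening of the paper's terser appeal to Lemma~\ref{lemma-0}. Your caveat that a cycle-free sink component must be excluded is a real, if minor, gap in the theorem's literal statement that the paper leaves implicit, and your remark that the per-sink-component constructions do not interact is a useful explicit justification. The one substantive divergence is in the algorithm. You correctly isolate the only nontrivial subroutine --- deciding whether a strongly connected component has a simple cycle of length exceeding the constant $n$ --- but you flag it as the delicate step and propose to discharge it by color coding or by the fixed-parameter tractability of finding a long directed cycle. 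The paper closes it elementarily: enumerate all tuples of nodes of the fixed length, test whether they form a simple path $a_1\to\cdots\to a_{n+1}$, delete the intermediate nodes $a_2,\ldots,a_n$ from the graph, and test whether $a_1$ is reachable from $a_{n+1}$ in what remains; a simple cycle of length greater than $n$ exists if and only if some such tuple passes both tests, since any returning walk shortens to a simple path avoiding the deleted nodes and splices onto the prefix to form the long simple cycle. Because the prefix has constant length, this is a polynomial number of reachability queries and needs no parameterized machinery. (The paper as written enumerates $n$-tuples rather than $(n+1)$-tuples, which would misclassify an exact $n$-cycle as ``longer than $n$''; your formulation avoids that off-by-one.) Your route is sound provided you actually invoke the known FPT result for detecting a directed cycle of length \emph{at least} $k$ --- plain color coding finds cycles of length exactly $k$, so the ``adaptation'' is precisely the part you have not written --- but the enumerate-and-check-reachability trick reduces the whole step to a two-line observation.
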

\begin{proof}
For the first part of the theorem, recall that, by Lemma~\ref{lemma-0}, the only images of an $n$-cycle that are consistent instances are simple cycles.
 For the ``if'' direction, suppose that there is a sink strongly connected component such that every simple cycle is a homomorphic image of an $n$-cycle. By Lemma~\ref{lemma-1}, every repair contains a cycle from each sink strongly connected component, hence   every repair satisfies the $n\Cycle$ query.
 For the ``only if" direction, if there is no sink strongly connected component with the property that all its simple cycles are homomorphic images of an $n$-cycle, then, by Lemmas \ref{lemma-3} and  \ref{lemma-2}, we can build a repair $D$ that contains no homomorphic image of an $n$-cycle, hence $D$ does not satisfy the $n\Cycle$ query.

For the second part of the theorem, we need to prove that the property in the first part of the theorem can be checked in polynomial time.  Consider the following algorithm:

For each sink strongly connected component, do:
\squishlist
\item [1.]  To check that there is no simple cycle with more than $n$ nodes:
\squishlist
\item[(a)]   Examine each $n$-tuple of nodes $(a_1,...,a_n)$ and test whether they form a simple path.

\item[(b)] If they do, check whether there is a path from $a_n$ to $a_1$ that does not contain any of the other nodes $a_2,...,a_{k-1}$.
\squishend

Since $n$ is a fixed number, this last check can be done in polynomial time and, in fact, even in Datalog with inequalities $\neq$. If such a disjoint path exists, then there is a simple cycle with more than  $n$ nodes through $(a_1,...,a_n)$. Otherwise, there is no simple cycle with more than $n$ nodes through $(a_1,...,a_n)$.

\item [2.] For cycles with fewer than $n$ nodes, we check exhaustively in polynomial time all combinations of $k<n$ nodes to see whether they form a cycle which is
not a homomorphic image of $n$-cycle.
\squishend
Clearly, this algorithm runs in time bounded by a polynomial in the size of $R$ (the degree of the polynomial depends on $n$, which is a fixed number).
\end{proof}

Finally, we consider  queries that are   a disjoint collection of simple cycles.

\begin{theorem}
\label{thm2} Let $q$ be a boolean conjunctive query that is a disjoint collection of cycles $C_1,\ldots C_m$  such that the length of each cycle in the collection is at least $2$ and it does not divide the length of any other cycle in the collection.
Then the following statements are true.
\squishlist
\item
Every repair of an instance $R$ satisfies the query $q$  if and only if  for every cycle $C_i$, $1\leq i\leq m$,  in $q$,  there exists a sink strongly connected component of $R$ such that all its
simple cycles are homomorphic images of $C_i$.
\item $\cert{q}$ is in $\PTIME$.
\squishend
\end{theorem}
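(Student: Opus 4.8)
The plan is to reduce Theorem~\ref{thm2} to the single-cycle case already settled in Theorem~\ref{thm1}. The key structural observation is that, because the cycles $C_1,\ldots,C_m$ share no variables, a homomorphism from $q$ into an instance $D$ is nothing more than an independent choice of homomorphism for each $C_i$; consequently, for every consistent instance $D$ we have $D\models q$ if and only if $D\models C_i$ for every $i$. I would state and justify this equivalence first, since it is the conceptual hinge of the whole argument. Recall also, from the homomorphism analysis underlying Theorem~\ref{thm1} together with Lemma~\ref{lemma-0}, that a simple cycle of length $k$ is a homomorphic image of an $n_i$-cycle (equivalently, of $C_i$) precisely when $k$ divides $n_i$.

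For the first part I would then argue the following chain: every repair of $R$ satisfies $q$ exactly when every repair satisfies each conjunct $C_i$, which in turn holds exactly when, for each $i$, every repair of $R$ contains a homomorphic image of $C_i$. The forward direction of the first step is immediate because $q$ is a conjunction; the backward direction uses the vertex-disjoint decomposition above. The crucial move is that the two universal quantifiers, one over repairs and one over the index $i$, commute, so that each cycle $C_i$ can be treated in isolation. Applying the first part of Theorem~\ref{thm1} with $n$ equal to the length $n_i$ of $C_i$, the property ``every repair of $R$ contains a homomorphic image of $C_i$'' is equivalent to the existence of a sink strongly connected component of $R$ all of whose simple cycles are homomorphic images of $C_i$. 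Conjoining over all $i$ yields exactly the characterization in the statement.

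For the contrapositive (``only if'') direction I would make explicit that if the characterization fails, it fails for one specific cycle $C_i$: then no sink strongly connected component has all of its simple cycles as homomorphic images of an $n_i$-cycle, and the construction behind the ``only if'' part of Theorem~\ref{thm1} (via Lemmas~\ref{lemma-3} and~\ref{lemma-2}) produces a single repair $D$ of $R$ containing no homomorphic image of an $n_i$-cycle. This one repair already falsifies $C_i$, hence falsifies $q$, which is all that is needed. I would also note that distinct cycles $C_i$ may legitimately be witnessed by the same sink strongly connected component (for instance, a $2$-cycle is a homomorphic image of both a $4$-cycle and a $6$-cycle, and neither of $4,6$ divides the other), but this creates no difficulty, since the characterization only requires the existence of a witness for each cycle. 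For the second part, I would check the characterization in polynomial time by invoking the algorithm from the second part of Theorem~\ref{thm1} once for each of the fixed cycle lengths $n_i$: for each sink strongly connected component $S$ and each $i$, test whether every simple cycle of $S$ has length dividing $n_i$, using the same two-step procedure (rule out simple cycles on more than $n_i$ nodes by the $n_i$-tuple-plus-disjoint-path test, then exhaustively inspect the cycles on fewer than $n_i$ nodes). Since $m$ and every $n_i$ are constants determined by $q$, and there are only polynomially many sink strongly connected components, the total running time is polynomial in the size of $R$, with degree depending only on $q$.

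The main obstacle, such as it is, is not computational but a matter of careful bookkeeping: making the quantifier commutation and the vertex-disjoint decomposition of $\models$ fully rigorous, and confirming that a counterexample repair for a single offending $C_i$ already defeats the entire conjunction $q$. Once these points are pinned down, both parts follow from Theorem~\ref{thm1} with essentially no new machinery.
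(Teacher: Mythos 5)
Your proposal is correct and follows essentially the same route as the paper: both reduce the disjoint-collection case to the single-cycle case of Theorem~\ref{thm1} by exploiting that the $C_i$ share no variables (so the universal quantifiers over repairs and over indices commute), use Lemma~\ref{lemma-1} for the ``if'' direction and Lemmas~\ref{lemma-3} and~\ref{lemma-2} for the ``only if'' direction, and run the Theorem~\ref{thm1} algorithm once per cycle length for polynomial time. Your write-up merely makes explicit some bookkeeping the paper leaves implicit.
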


\begin{proof}
 For the ``if'' direction of the first part of the theorem,
since every repair $D$ of  $R$  contains a cycle from each sink strongly connected component, we have that $D$ contains
 a homomorphic image of the cycle $C_i$, for each $i\leq m$.
For the ``only if" direction of the first part of the theorem,
suppose that there is a cycle $C_i$ such that no sink strongly connected component contains a homomorphic image of it.  Lemmas \ref{lemma-3}  and \ref{lemma-2} yields a repair that does not satisfy the query $q$.

For the second part of the theorem, we need to check that the condition in the first part of the theorem can be checked in polynomial time. This is an easy consequence of the second part of Theorem \ref{thm1}.
\end{proof}

We conclude the paper by combining Theorems \ref{query-structure-thm}, \ref{FO-thm}, and \ref{thm2} into a  single result.

\begin{theorem} \label{summary-thm} Let $\bf R$ be a relational schema consisting of a single binary relation $R$ in which the first attribute is the key.
If $q$ is a boolean conjunctive query over $\bf R$, then $\cert{q}$ is in $\PTIME$. Moreover, exactly one of the following two possibilities hold:
\squishlist
\item The query $q$ is $\FO$-rewritable, and it is equivalent under the key constraint to the $1\Cycle$ query or to a path query.
    \item The query $q$ is not $\FO$-rewritable, and it is equivalent under the key constraint to a disjoint collection of cycles such that the length of each cycle is at least $2$ and it does not divide the length of any other cycle in the collection.
        \squishend
        \end{theorem}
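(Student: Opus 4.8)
The plan is to assemble the statement directly from the structural and complexity results already proved, with one small observation needed to dispose of the boundary case of the $1\Cycle$ query. First I would apply Theorem~\ref{query-structure-thm} to replace $q$ by a query $q'$ that is equivalent to $q$ under the key constraint and is either a path query or a disjoint collection of cycles in which no cycle length divides another. By the remark in the Preliminaries, queries that are equivalent under the key constraint have the same consistent-answer problem, so $\cert{q}=\cert{q'}$; likewise $\FO$-rewritability is a property of $\cert{\cdot}$ and therefore transfers between $q$ and $q'$. Thus it suffices to analyze $q'$.

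Next I would split the ``disjoint collection of cycles'' outcome according to whether a cycle of length $1$ occurs. The key observation is that $1$ divides every integer, so the non-division condition of Theorem~\ref{query-structure-thm} forces that if a cycle of length $1$ appears then it is the \emph{only} cycle in the collection. Hence a disjoint collection of cycles is either exactly the $1\Cycle$ query or a disjoint collection of cycles each of length at least $2$ satisfying the non-division property. Combined with the path case, this yields three mutually exclusive normal forms for $q'$: (i) a path query; (ii) the $1\Cycle$ query; or (iii) a disjoint collection of cycles each of length at least $2$ in which no length divides another.

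For the $\PTIME$ claim I would argue case by case. In cases (i) and (ii), Theorem~\ref{FO-thm} shows that $\cert{q'}$ is $\FO$-rewritable, and since every first-order query is evaluable in polynomial time, $\cert{q'}\in\PTIME$. In case (iii), Theorem~\ref{thm2} gives $\cert{q'}\in\PTIME$ directly. As $\cert{q}=\cert{q'}$, this establishes $\cert{q}\in\PTIME$ in all cases. For the dichotomy itself I would simply read off the two possibilities from Theorem~\ref{FO-thm}: it asserts that $\cert{q}$ is $\FO$-rewritable precisely when $q$ is equivalent under the key constraint to the $1\Cycle$ query or to a path query, which is exactly the union of cases (i) and (ii), while the complementary situation is exactly case (iii). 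Because a decision problem either is or is not $\FO$-rewritable, the two listed possibilities are mutually exclusive and jointly exhaustive.

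I do not expect any genuine obstacle, since the theorem is a synthesis of results already established; the only point requiring care is the treatment of the $1\Cycle$ query, where one must notice that the divisibility condition of Theorem~\ref{query-structure-thm} collapses any collection containing a cycle of length $1$ to the $1\Cycle$ query alone. This is precisely what separates the single $\FO$-rewritable cycle (length $1$) from the non-$\FO$-rewritable cycles (length at least $2$), placing the split exactly along the line drawn by Theorem~\ref{FO-thm}.
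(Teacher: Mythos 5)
Your proposal is correct and follows exactly the route the paper takes: the paper offers no separate proof of Theorem~\ref{summary-thm} beyond the remark that it ``combines'' Theorems~\ref{query-structure-thm}, \ref{FO-thm}, and \ref{thm2}, which is precisely your assembly. Your explicit observation that the divisibility condition forces any collection containing a $1$-cycle to collapse to the $1\Cycle$ query alone is a detail the paper leaves implicit, and it correctly justifies why the two alternatives are exhaustive and mutually exclusive.
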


\eat{
\section{Counting Repairs}
\begin{theorem}
 $\#\cert{q}$ is in $\FP$ only when $q$ is equivalent to the query  ``there is a path of length 2"; for all other queries $q$,  $\#\cert{q}$ is $\sP$-complete.
\end{theorem}

\begin{proof}
The 2-path query is in $\FP$ as a consequence of Lemma 5 in \cite{MaslowskiW14}. The cycles query
and the path queries  for paths $>2$ are $\sP$-complete as a consequence of Lemma 2 in \cite{MaslowskiW14}.
\end{proof}
Interesting observations for special cases:

1. The 2-path query is satisfied on all repairs unless the database is a collection of disjoint edges.

2. All path queries are satisfied on all repairs if there is a strongly connected component which is a sink (this comes from
results in Section~\ref{algo-sec} and especially from Lemma~\ref{algo-lemma}).

\section{Two-Keys Constraint}

The following  is easy to see:
\begin{proposition}
A consistent graph is a disjoint collection of simple cycles and simple paths.

Dependency-reduced CQs are the same as in the case of one key, namely either a simple path
or a disjoint union of simple cycles.

 \end{proposition}

\begin{theorem}
\label{two-keys-thm}

 Let $q$ be the  Boolean CQ  which is a union of disjoint simple cycles. $q$ always false except when  the given database includes  a disjoint union of simple cycles and these cycles are such that the query homomorphically maps on them.
 \end{theorem}

 \begin{proof}
  If the database/graph contains ``more'' than a simple cycle in one of its connected components
 (i.e., a connected component of the underlying undirected graph) then there is a repair of this
 connected component that does not contain a cycle. To prove, observe that all repairs are disjoint
 unions of simple cycles and simple paths. Then, from any repair $D$ with a cycle we can construct
 a repair $D'$ without a cycle by taking a node $u$ of the cycle that has an edge $e$ (it dos not matter which direction this edge  $e$ is)
   to a node outside the cycle. We can remove the cycle edge adjacent to $u$ and add $e$. Thus we construct $D'$ with no cycles.
\end{proof}

Hence the theorem:

\begin{theorem}
Let $q$ be the  Boolean CQ  which is a union of disjoint simple cycles. Then,
$\cert{q}$ is   first-order expressible.
\end{theorem}

}

\bibliographystyle{elsarticle-num}
\bibliography{cqabib,repairs}
\eat{\appendix

\section{Summary of existing results  -- Here, only temporarily in case useful with related work}\label{sec:intro}

\begin{center}
 \begin{tabular}{|c|c |c|c| }
\hline
 \bfseries{\,coNP-complete \,} & {\,\bf Ptime non-FO\,}  & {\,\bf FO\,}& {\,\bf non-FO\,}
 \\
 \hline
 & & & \\
  $R(\underbar{x},y) S(\underbar{z},y)$&   & &\\
            $[$Fuxman\&Miller 2007$]$  &  &    &\\
    \hline
    & & &\\
    & & &\\
  & $R(\underbar{x},y) S(\underbar{y},z) T(\underbar{z},x)$ &  & \\
 &(cyclic) & & \\
           &  $[$Wijsen 2013$]$  &    &\\

    \hline
   & & & \\
    &  $R(\underbar{x},y) S(\underbar{y},x)$ & & \\
   & & &\\
    \hline
    & & &\\
     &  & $R(\underbar{x},y) S(\underbar{y},z)$  &\\
    & & &\\
     \hline
      & & &\\
    & &   &\\
 &   & $R(\underbar{x},y) S(\underbar{y},z) T(\underbar{z},v)$ &\\
 & & &\\
    \hline
    & Acyclic self-join free &   Acyclic self-join free  &   \\
     & with  cycle in attack graph &  without cycle in attack graph & \\
         &  $[$Wijsen TODS12$]$    &   $[$Wijsen TODS12$]$   &\\

    \hline
     Acyclic self-join free &   Acyclic self-join free  &  &  \\
     with strong cycle  & with only terminal  & & \\
      in attack graph &  weak attach cycles & & \\
   $[$Wijsen PODS13$]$    &   $[$Wijsen PODS13$]$ &   &\\

    \hline

  2 atoms  self-join free with & \multicolumn{2}{c|}{ 2 atoms self-join free  with}  &   \\
    $key(R1)\cup key(R2)\not  \subseteq L$  &\multicolumn{2}{c|}{ $key(R1)\cup key(R2)\subseteq L$}   &\\
    $[$Kolaitis IPL12$]$    &   \multicolumn{2}{c|}{ $[$Kolaitis IPL12$]$}    &\\
    \hline
    \hline
     & & &\\
    & $R(\underbar{x},y) R(\underbar{y},c) $  &  &\\
 & $[$Wijsen Inf.Sys.2009 Cor. 3$]$ & &\\
\hline
 &  &  &\\
 & & $R(\underbar{x},y) R(\underbar{y},z) $ &\\
  & &  $[$Wijsen Inf.Sys.2009 Cor. 3$]$ &\\
\hline
  & &  Acyclic Self-joins: &\\
  & &  4 conditions for  &\\
  & &  FO Rewriting &\\
   & & $[$Wijsen Inf.Sys.2009 Cor. 3$]$ &\\
   \hline
   \end{tabular}
\end{center}
}

\end{document}